\documentclass[journal,12]{IEEEtran}
\usepackage[pdftex]{graphicx}
\graphicspath{{../pdf/}{../jpeg/}}
\DeclareGraphicsExtensions{.pdf,.jpeg,.png}
\usepackage[cmex10]{amsmath}
\usepackage{amsthm}
\usepackage{graphicx,color}
\usepackage[dvipsnames]{xcolor}
\usepackage{graphicx}
\hyphenation{op-tical net-works semi-conduc-tor}
\usepackage{mathtools}
\usepackage{amssymb,color}
\allowdisplaybreaks
\usepackage{subcaption}
\usepackage{cite}
\usepackage{blindtext}
\usepackage{tcolorbox}
\usepackage{colortbl}

\definecolor{Gray}{gray}{0.9}
\usepackage{xcolor}
\usepackage{lipsum}
\usepackage[font={small}]{caption}
\usepackage{algorithm}
\usepackage{algorithmic}
\usepackage{array} 
\usepackage{makecell}
\usepackage{tabularx}
\usepackage{bm}
\usepackage{siunitx, soul}

\usepackage{pgfplots}
\pgfplotsset{compat=newest}
\usetikzlibrary{plotmarks}
\usetikzlibrary{arrows.meta}
\usepgfplotslibrary {patchplots}
\usepackage{grffile}

\newtheorem{prop}{Proposition}

\newtheorem{rem}{Remark}
\usepackage{booktabs}

\usepackage[font=scriptsize,labelfont=bf]{caption}
\usepackage{pifont}

\newcommand{\tr}{\text{Tr}}

\newcommand{\thr}{\text{th}}

\makeatother
\bstctlcite{IEEEexample:BSTcontrol}
\newcommand{\qh}{\mathbf{h}}
\newcommand{\qa}{\mathbf{a}}

\newcommand{\qw}{\mathbf{w}}
\newcommand{\qx}{\mathbf{x}}

\newcommand{\qE}{\mathbf{E}}

\newcommand{\qW}{\mathbf{W}}

\usepackage{enumitem}
\usepackage{multirow}

\makeatother
\bstctlcite{IEEEexample:BSTcontrol}
\definecolor{LightGray}{gray}{0.9}
\definecolor{MediumGray}{gray}{0.5}
\definecolor{DarkGray}{gray}{0.2}

\begin{document}
\bstctlcite{IEEEexample:BSTcontrol}

\title{A Riemannian Manifold Approach to Constrained Resource Allocation in ISAC}

\author{Shayan Zargari, Diluka Galappaththige, \IEEEmembership{Member, IEEE},  and  Chintha Tellambura, \IEEEmembership{Fellow, IEEE},  Vincent Poor, \IEEEmembership{Life Fellow, IEEE}
\thanks{S. Zargari, D. Galappaththige, and C. Tellambura with the Department of Electrical and Computer Engineering, University of Alberta, Edmonton, AB, T6G 1H9, Canada (e-mail: \{zargari, diluka.lg, ct4\}@ualberta.ca). }
\thanks{H. V. Poor is with the Department of Electrical and Computer Engineering,
Princeton University, Princeton, NJ 08544 USA (e-mail: poor@princeton.edu).
}
\vspace{-7mm}}

\maketitle

\begin{abstract} 
This paper introduces a new resource allocation framework for integrated sensing and communication (ISAC) systems, which are expected to be fundamental aspects of sixth-generation networks. In particular, we develop an augmented Lagrangian manifold optimization (ALMO) framework designed to maximize communication sum rate while satisfying sensing beampattern gain targets and base station (BS) transmit power limits. ALMO applies the principles of Riemannian manifold optimization (MO) to navigate the complex, non-convex landscape of the resource allocation problem. It efficiently leverages the augmented Lagrangian method to ensure adherence to constraints. We present comprehensive numerical results to validate our framework, which illustrates the ALMO method's superior capability to enhance the dual functionalities of communication and sensing in ISAC systems. For instance, with \num{12} antennas and \qty{30}{\dB m} BS transmit power, our proposed  ALMO algorithm delivers a \qty{10.1}{\percent} sum rate gain over a benchmark optimization-based algorithm. This work demonstrates significant improvements in system performance and contributes a new algorithmic perspective to ISAC resource management.
\end{abstract}

\begin{IEEEkeywords}
Integrated sensing and communication (ISAC), transmit beamforming, manifolds algorithm.
\end{IEEEkeywords}

\IEEEpeerreviewmaketitle
\section{Introduction}
The future wireless landscape, encompassing the Internet of Things, vehicle-to-everything, smart traffic control, virtual/augmented reality, smart homes, unmanned aerial vehicles, and factory automation,  demands ultra-reliable, low-latency sensing and communication \cite{Liu2022ISAC, Wang2022ISAC, Zhang2022}. This necessitates a shift towards integrated sensing and communication (ISAC), merging traditional communication architectures with advanced sensing capabilities \cite{Liu2022ISAC, Wang2022ISAC, Zhang2022}. Moreover,  ISAC can significantly improve spectrum utilization and energy efficiency and reduce implementation costs by leveraging the dual use of radio signals, hardware architecture, and signal processing for sensing and communication \cite{Liu2022ISAC, Wang2022ISAC, Zhang2022}. On the other hand, recent advances in extremely large-scale antenna arrays and high-frequency communication (i.e., millimeter wave (mmWave) and terahertz (THz)) facilitate high-resolution sensing (in both range and angle) and accuracy (in detection and estimation), achieving the stringent sensing requirements expected for future applications \cite{Liu2022ISAC, Wang2022ISAC, Zhang2022}.

\subsection{Previous Approaches for Beamforming Design}
Many ISAC  beamforming designs have recently been developed, each with distinct objectives and optimization methods \cite{Liu2020, He2022, Zhao2022, Hua2023, Zhenyao2023} (see Table \ref{tab:comparison}). In particular, \cite{Liu2020} uses zero-forcing (ZF) and semidefinite relaxation (SDR)  to maximize the weighted sum of communication and radar rates in a multiple-input multiple-output (MIMO) ISAC system. The study \cite{He2022} designs base station (BS) beamforming via successive convex approximation (SCA) and SDR-based iterative method to enhance the energy efficiency of the multi-user, multi-target ISAC system. Reference \cite{Zhao2022} designs transmit and receive beamforming for a single target to maximize sensing SINR and proposes an alternating optimization (AO) algorithm using semidefinite programming (SDP), Charnes-Cooper transformation, and minimum variance distortionless response beamforming. Reference \cite{Hua2023} presents ISAC transmit beamforming designs with two design goals: Maximizing the minimum weighted beampattern gain and matching the sensing beam pattern. For both designs, SDR-based optimal solutions are proposed. Study \cite{Zhenyao2023} optimizes the BS transmit beamforming and the user transmit power based on the SCA technique for a full-duplex (FD) ISAC system under transmit power minimization and sum-rate maximization.

\begin{table*}[htbp]
\centering
\caption{Comparison of optimization techniques in ISAC systems}
\label{tab:comparison}
\begin{tabularx}{\textwidth}{
    >{\raggedright\arraybackslash}p{1.2cm} 
    >{\raggedright\arraybackslash}p{1.5cm} 
    >{\raggedright\arraybackslash}p{5.3cm}
    >{\raggedright\arraybackslash}p{4cm} 
    >{\raggedright\arraybackslash}p{4cm} 
}
\toprule[1.5pt]
\rowcolor{LightGray} \textbf{Reference} & \textbf{Strategy} & \textbf{Objectives} & \textbf{Application} & \textbf{Key Advantages} \\
\midrule
\cite{Liu2020} & SDR & Maximize weighted sum of communication rate \& radar rates & MIMO ISAC & Balances communication rate and radar performance \\
\midrule
\cite{He2022} & SCA, SDR & Enhance energy efficiency & Multi-user, multi-target ISAC & Improves energy efficiency \\
\midrule
\cite{Zhao2022} & AO, SDR & Maximize sensing SINR & Single target sensing & Optimizes sensing performance \\
\midrule
\cite{Hua2023} & SDR & Maximize beampattern gain, match beam pattern & ISAC transmit beamforming & Enhances sensing accuracy \\
\midrule
\cite{Zhenyao2023} & SCA & Minimize transmit power, maximize sum-rate & Full-duplex ISAC  & Reduces power, improves sum-rate \\
\midrule
\cite{Zhong2023} & PPCCM & Radar and mutual interference beampattern matching with constant modulus constraints at the IRS & IRS-aided multi-user, multi-target ISAC & Solves with parallel conjugate gradient algorithm \\
\midrule
\cite{Wang2022} & MO & Minimize weighted sum of communication and radar beamforming errors & mmWave dual-function radar-communication  & Hybrid beamforming design \\
\midrule
\cite{Shtaiwi2023} & AO, MO & Maximize communication total rate by optimizing BS beamforming and IRS phase shifts & IRS-aided multi-user, single-target ISAC  & Enhances communication total rate \\
\midrule
\textbf{This work} & ALMO, IMBO & Maximize sum communication rate under the constraint of sensing beampattern gain in manifold space & Generic multi-user, multi-target ISAC & Outperforms SDR/SCA, enhances communication rate and sensing. \\
\bottomrule[1.5pt]
\end{tabularx}
\vspace{-4mm}
\end{table*}

Before delving into the manifold optimization (MO) approach,  we briefly overview conventional techniques for handling non-convex optimization problems (NCOPs): SDR, SDP, SCA, and AO. SDR or SDP relaxation converts non-convex quadratically constrained problems into convex ones. However, this often introduces a non-convex rank one constraint \cite{Luo5447068}. The relaxed rank one problem can be solved via convex algorithms. However, to exact a rank-1 solution,  penalty-based methods or Gaussian randomization are needed \cite{Luo5447068}. Nonetheless, SDR remains widely utilized in beamforming design and other problems. Another prevalent method is SCA, which tackles NCOPs by iterating a series of convex approximations \cite{Mehanna6954488, boyd2004convex}. At each step, SCA linearizes or approximates the non-convex parts of the problem with convex functions based on the current solution. SCA finds applications in power control, resource allocation, and various other wireless problems. Finally, the AO  technique decomposes NCOPs into smaller, more manageable sub-problems \cite{luo1992convergence}. Each sub-problem typically possesses a closed-form solution or is convex. The process alternates between these sub-problems until convergence is attained. AO is also widely employed for optimization tasks.

\subsection{Riemannian Manifold Optimization}
A smooth manifold is a topological space that locally resembles Euclidean space near each point. Simple manifolds include lines and circles, planes, spheres, and matrix groups.  A Riemannian manifold $\mathcal{M} $  is a smooth manifold with a Riemannian metric,  which is a way of measuring distances and angles on the manifold. It assigns a positive definite inner product to each tangent space of the manifold in a smooth, varying manner. Essentially, it provides a notion of distance and angle between tangent vectors at each point of the manifold \cite{hu2020brief}. This approach is exploited when the parameters are naturally constrained to lie on a manifold, such as the surface of a sphere, the set of rotation matrices, or the space of symmetric positive definite matrices \cite{boumal2023introduction}. A simple example of MO is the problem of finding the shortest path between two points on the surface of a sphere, known as a geodesic \cite{lee2006riemannian}. 
In this case, the sphere is the manifold, and the optimization problem is to minimize the distance function over the path constrained to lie on the sphere's surface. Unlike in Euclidean space, where the shortest path between two points is a straight line, the shortest path follows an arc of a circle. 

MO has recently been used for communication/radar system designs \cite{Zhou2017, Zhu2018, Zhong2022, Xiong2023, Zhai2022, Zhong2023, Wang2022, Shtaiwi2023}. MO conceptually transforms a constrained optimization problem into an unconstrained optimization problem on a manifold. The resultant problem is solved using Riemannian deterministic algorithms such as Riemannian steepest descent, conjugate gradient descent, and so on \cite{liu2020simple}.

MO-based frameworks are developed in \cite{Zhou2017, Zhu2018} to improve communication performance, while \cite{Zhong2022, Xiong2023, Zhai2022} apply MO to MIMO radar systems.  Recently, \cite{Zhong2023, Wang2022, Shtaiwi2023} use MO in ISAC systems. In particular,  \cite{Zhong2023} proposes a parallel product complex circle manifold (PPCCM) framework for an intelligent reflecting surface (IRS)-aided multi-user, multi-target ISAC system. It jointly minimizes radar and mutual interference beampattern matching error with constant modulus constraints (at the BS beamforming and IRS phase shifts), ignoring the quality-of-service for communication and sensing. It converts the resulting problem into an unconstrained coupling quadratic problem and develops a parallel conjugate gradient algorithm. Reference \cite{Wang2022} studies a hybrid (i.e.,  digital and analog) beamforming design for mmWave dual-function radar-communication system with a single user to minimize the weighted sum of communication and radar beamforming errors by proposing a Riemannian product manifold trust region algorithm. Reference \cite{Shtaiwi2023} maximizes the communication sum rate by jointly optimizing the BS beamforming and IRS phase shifts using an AO algorithm with MO and conjugate gradient (CG) method.



\subsection{Motivation and Our Contribution}
Because the existing work is limited, we offer a unique MO framework to enhance the performance of ISAC. Without loss of generality, we focus on a generic ISAC system with multiple users and targets. We optimize the BS transmit beamforming to maximize the communication sum rate of the users under the sensing beampattern gain targets and the BS transmit power constraint. As this problem is non-convex,  we introduce a novel MO approach tailored for ISAC  optimization.

This paper's key contributions are  as follows:
\begin{itemize}
    \item We introduce and design a new resource allocation framework utilizing augmented Lagrangian manifold optimization (ALMO) \cite{liu2020simple} for ISAC. This framework balances the dual objectives of maximizing communication sum rate and satisfying the sensing beampattern gain requirements, paving the way for realizing efficient ISAC  networks.
    
    \item To the best of our knowledge, this is the first  ALMO algorithm to be applied to ISAC, harnessing the complex geometry of Riemannian manifolds. Our approach integrates Riemannian MO with an augmented Lagrangian framework, innovatively overcoming the challenges of non-convex optimization endemic to ISAC resource allocation.

    \item Although a few prior works \cite{Zhong2023, Wang2022, Shtaiwi2023} use MO techniques, they do not consider the iterative refinement of Lagrange multipliers. In contrast, our proposed algorithm algorithm improves these earlier techniques by concurrently updating Lagrangian and Lagrange multipliers until convergence. This method guarantees the dual fulfillment of communication and sensing requirements, marking a significant leap forward in ISAC resource management. Importantly, prior methods are confined to limited constraints; our solution approach can handle any constraints, leading to a generalized ISAC framework.   
    
    \item Through extensive numerical experiments, we demonstrate the superior performance of our method over conventional SDR/SCA optimization. Our findings showcase substantial improvements in communication sum rate while satisfying the sensing beampattern gain, highlighting the significant potential of ALMO for pragmatic application in ISAC systems. For example, with \num{12} antennas and \qty{30}{\dB m} of BS transmit power, our algorithm outperforms conventional optimization \qty{10.1}{\percent}. Our analysis also solidifies the computational efficiency and real-time applicability of ALMO, highlighting its adaptability to resource-constrained scenarios and its rapid convergence properties.

\end{itemize}

\textit{Notation}:  
$\mathbb{C}^{M\times N}$ and ${\mathbb{R}^{M \times 1}}$ represent $M\times N$ dimensional complex matrices and $M\times 1$ dimensional real vectors, respectively. For a square matrix $\mathbf{A}$, $\mathbf{A}^{\rm{H}}$ and $\mathbf{A}^{\rm{T}}$ are the Hermitian conjugate transpose and transpose, respectively. $\mathbf{I}_M$ denotes the $M$-by-$M$ identity matrix. $\mathbf{0}_M$ is the $M$-dimensional all-zero vector. The Euclidean norm of a complex vector and the absolute value of a complex scalar are denoted by $\|\cdot\|$ and $|\cdot|$, respectively. Expectation and the real part of a complex number are denoted by $\mathbb{E}[\cdot]$ and $\Re(\cdot)$, respectively. $\mathbf{1}_{\{x\}}$ equals  $1$ if $x > 0$ and  $0$ otherwise. A  circularly symmetric complex Gaussian (CSCG) random vector with mean $\boldsymbol{\mu}$ and covariance matrix $\mathbf{C}$ is denoted by $\sim \mathcal{C}\mathcal{N}(\boldsymbol{\mu},\,\mathbf{C})$.   The operation $\text{unt}(\mathbf{a})=\left[\frac{a_1}{|a_1|}, \ldots, \frac{a_n}{|a_n|}\right]$. $\mathbf A \circ \mathbf B $ is the Hadamard product. 
Further, $\mathcal{O}$ expresses the big-O notation. The clip operator is defined as
$\text{clip}_{[a, b]} (x)= \max\{a, \min(b, x)\}$. Finally, $\mathcal{K} \triangleq \{1,\ldots,K\}$, $\mathcal{N} \triangleq \{1,\ldots,N\}$, and $\mathcal{K}_k \triangleq \mathcal{K}\setminus\{k\}$. 

\section{System, Channel, and Signal Models}\label{Sec_system_modelA}
This section describes the ISAC system, channel, and signal model. 

\subsection{System and Channel Models}
We consider an ISAC system (Fig.~\ref{fig_SystemModel}), which consists of an $M$-antenna BS with uniform linear array (ULA) antennas, $K$ single-antenna communication users, and $N$ targets. We assume the BS antennas are spaced at half-wavelengths \cite{Zhenyao2023}. The BS communicates with the users in the downlink and performs radar sensing towards $N$ potential target directions.

Block flat-fading channels are assumed.  In each fading block, $\qh_k\in \mathbb{C}^{M\times 1}$ and $\qa(\theta_n)\in \mathbb{C}^{M\times 1}$ denote the BS-to-$k$-th user and the BS-to-$n$-th target channels, respectively. Here, the pure communication channels are given as
\begin{eqnarray}
    \qh_k = \zeta_{h_k}^{1/2} \tilde{\mathbf{h}}_k,~\forall k \in \mathcal{K},
\end{eqnarray}
where $\zeta_{h_k}$ is the large-scale pathloss, which stays constant for several coherence intervals, and $\tilde{\mathbf{h}}_k \sim \mathcal{CN}(\mathbf{0}, \mathbf{I}_{M})$ accounts for the small-scale Rayleigh fading. 

On the other hand, to model the sensing channels between the BS and the targets, we follow the echo signal representation in MIMO radar systems \cite{Zhenyao2023}. Thus, we model these channels as line-of-sight (LoS) channels using transmit array steering vectors to the direction $\theta_n$  as \cite{Zhenyao2023}:
\begin{equation}
    \qa(\theta_n) = \! \frac{1}{\sqrt{M}} \!\left[1, e^{j\pi \sin(\theta_n)}, \ldots, e^{j\pi (M-1) \sin(\theta_n)} \right]^{\rm{T}}\!,~\forall n \in \mathcal{N},
\end{equation}
where $\theta_n$ is the $n$-th target's direction with respect to the $x$-axis of the coordinate system. 

\begin{rem}
    We assume the  ISAC system utilizes the time division duplex (TDD) mode, i.e., separate time slots,  for channel estimation and data transfer. Thus,  existing and well-developed approaches can be used to estimate channel state information (CSI) \cite{Marzettabook2016, Nayebi2018}. These include the least squares (LS) and minimum mean square error (MMSE) estimators \cite{Marzettabook2016, Nayebi2018}. The BS and users are thus assumed to have complete knowledge of CSI, whereas the BS has a general knowledge of the desired beampattern or sensing directions. This assumption defines the performance upper bounds for real-world scenarios.
\end{rem}

\subsection{Transmission Model}
The BS transmit signal $\qx\in \mathbb{C}^{M\times 1}$ is jointly designed for ISAC  \cite{Zhao2022, Wang2022NOMA}. It can be  expressed as
\begin{eqnarray}\label{eqn_tx_signal}
    \qx = \sum_{k\in \mathcal{K}} \qw_k q_k,
\end{eqnarray}
where $q_k \in \mathbb{C}$ is the intended data symbol for the $k$-th user with unit power, i.e., $\mathbb{E}\{\vert q_k\vert^2\}=1$, and $\qw_k \in \mathbb{C}^{M\times 1}$ is the BS transmit beamforming vector for the $k$-th user. Note that $\qw_k \in \mathbb{C}^{M\times 1}$  is optimized to generate an effective beampattern toward prospective target directions of interest, resulting in a higher radar receive signal-to-noise ratio (SNR) and improved sensing performance \cite{Zhao2022, Wang2022NOMA, Huang2022}. The received signal at the $k$-th user is given by
\begin{align}\label{eqn_rx_user_k}
    y_k &\!= \!\qh_k^{\rm{H}} \qx + n_k \! =\! \qh_k^{\rm{H}} \qw_k q_k\! +\! \sum_{i\in \mathcal{K}_k} \qh_k^{\rm{H}} \qw_i q_i \!+ \!n_k,~\forall k \in \mathcal{K},
\end{align}
where $n_k\sim \mathcal{CN}(0,\sigma^2)$ is the $k$-th user's additive white Gaussian noise (AWGN).

\begin{figure}[!t]
    \centering 
    \def\svgwidth{230pt} 
    \fontsize{8}{8}\selectfont 
    \graphicspath{{}}
    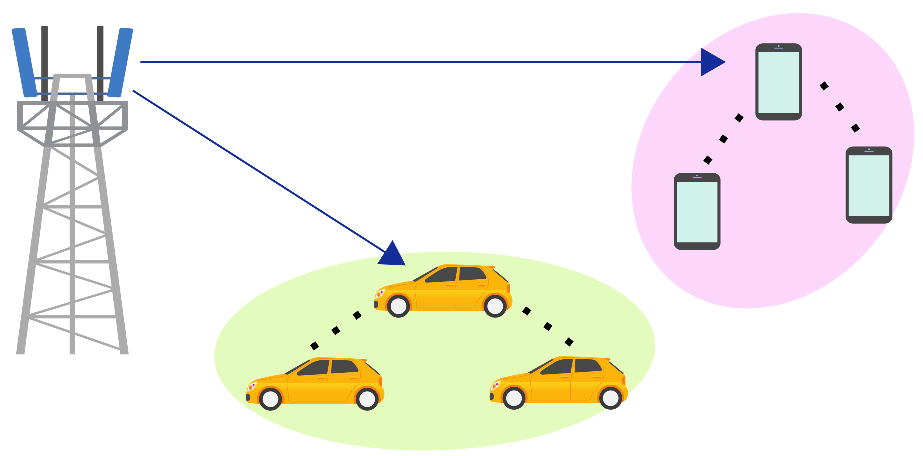  
    \caption{System model of an ISAC system: A $M$-antenna BS communicates with $K$ communication users and senses $N$ targets using a shared antenna array.}  \label{fig_SystemModel}\vspace{-4mm}
\end{figure}

\begin{rem}
    Although separate or dedicated sensing beamforming provides additional degrees of freedom at the BS for sensing, it also produces more interference for communication users who fail to remove sensing signal interference. To address this issue, our proposed system utilizes a single beam for both communication and sensing  \cite{Zhao2022, Wang2022NOMA, Huang2022}.
\end{rem}

\section{Communication and Sensing Performance}
In this section, we derive user communication rates and the transmit beampatterns for the targets to evaluate and optimize  ISAC. 

\subsection{Communication Performance}
The users utilize the received signal from the BS to decode their intended data. To this end, from \eqref{eqn_rx_user_k}, the received signal-to-noise-plus-interference ratio (SINR) at the $k$-th user can be written as
\begin{eqnarray}\label{eqn_gamma}
    \gamma_k = \frac{\vert \qh_k^{\rm{H}} \qw_k\vert^2}{\sum_{i\in \mathcal{K}_k} \vert \qh_k^{\rm{H}} \qw_i \vert^2 + \sigma^2},~\forall k \in \mathcal{K}.
\end{eqnarray}
Thus, the rate of the $k$-th user can be approximated by $\mathcal{R}_k = \log_2(1+ \gamma_k),~\forall k.$

\subsection{Sensing Performance}
In the sensing scenario, the transmit beampattern is the fundamental performance metric for MIMO radar signal design \cite{Stoica2007}. We thus focus on optimizing it since the appropriate design can improve sensing performance in terms of detection, sensing, or recognition through proper echo wave processing \cite{Stoica2007}. The transmit beam pattern is the power distribution of the transmit signal relative to the sensing angle $\theta \in [-\pi/2,\pi/2]$. We use a transmit signal by the BS  for communication and target sensing. The beam pattern gain at a certain angular direction is defined as \cite{He2022}
\begin{eqnarray}
    p(\theta_n) &=& \mathbb{E}\{| \qa(\theta_n)^{\rm{H}} \qx |^2\} \nonumber\\
    &=& \qa(\theta_n)^{\rm{H}} \left( \sum_{k\in \mathcal{K}} \qw_k \qw_k^{\rm{H}} \right) \qa(\theta_n), ~\forall n \in \mathcal{N}.
\end{eqnarray}
The transmit beam pattern depends on the radar target sensing requirements. For example, if the direction of potential targets is unknown, it is a uniformly distributed beam pattern. Conversely, the targets' directions are roughly known, e.g., in target tracking applications, it can be maximized in these potential directions \cite{Stoica2007}. 

\begin{rem} We design  ISAC through two crucial metrics: communication SINR and sensing beampattern gain. Communication SINR directly influences symbol detection accuracy and error reduction, a key measure of communication quality. Sensing success relies on the beampattern gain, as it directly impacts the target detection probability. A well-crafted sensing beampattern enables effective target recognition through transmit beamforming. Thus, we employ beampattern gain as the primary indicator of sensing effectiveness \cite{Stoica2007}.
\end{rem}

\section{Problem Formulation}
We now set up the ISAC problem formulation. Optimizing the BS transmit beamforming maximizes the users' communication sum rate. We ensure that the lowest sensing beampattern gain requirements and the maximum BS transmit power demand are met. The problem is thus formulated as follows:
\begin{subequations}
\begin{align}\label{P1}
    \text{(P1)}:~& \max_{\{\qw_k\}_{k\in \mathcal{K}}} \quad  \sum_{k\in \mathcal{K}} \log_2(1 + \gamma_k),  \\
    \text{s.t} \quad &  p(\theta_n)  \geq \Gamma_{n}^{\thr}, ~\forall n \in \mathcal{N},\label{P1_beamgain}  \\
    & \sum_{k\in \mathcal{K}} \Vert \qw_k \Vert^2 \leq p_{\rm{max}}, \label{P1_tx_pow} 
\end{align}
\end{subequations}
where \eqref{P1_beamgain} guarantees the sensing beampattern gain requirement of each target in which $\Gamma_{n}^{\thr}$ denotes the targeted sensing beampattern gain and  \eqref{P1_tx_pow} is the BS transmit power constraint with maximum allowable transmit power $p_{\rm{max}}$.  This problem formulation aligns with the 6G vision of  ISAC \cite{Liu2022ISAC, Wang2022ISAC, Zhang2022}. 

\section{Proposed Solution}
In this section, we propose an effective solution for the problem $\text{(P1)}$ based on fractional programming (FP) and MO to obtain the optimal BS transmit beamforming vectors \cite{Zargari9257429}. Notably, problem $\text{(P1)}$ is non-convex due to the non-convex objective function.


However, we cannot directly apply the manifold method to $\text{(P1)}$ as the optimization variable involves $\mathbf{W}=[\mathbf{w}_1, \ldots, \mathbf{w}_K]$. Thus, we perform equivalent transformations on $\text{(P1)}$ to handle it with MO. Also, the constraints are applied to the entire matrix $\mathbf{W}$. First, we introduce an index matrix defined as a $k$-order identity matrix, $\mathbf{E}_k$. This index matrix enables the individual representation of any column from $\mathbf{W}$ by combining $\mathbf{W}$ with the index matrix. Thus, the corresponding reformulation of the problem is presented as
\begin{subequations}
\begin{align}\label{P2}
    \text{(P2)}:~& \max_{\qW} \quad  \sum_{k\in \mathcal{K}} \log_2\left(1 +  \bar{\gamma}_k \right),   \\
    \text{s.t} \quad &  \qa(\theta_n)^{\rm{H}} \qW \qW^{\rm{H}} \qa(\theta_n) \geq \Gamma_{n}^{\thr}, ~\forall n \in \mathcal{N}, \\
    & \tr(\qW \qW^{\rm{H}}) \leq p_{\rm{max}}, 
\end{align}
\end{subequations}
where $\qE_{ki}$ denotes the $i$-th column of $\mathbf{E}_k$ and 
\begin{equation}\label{eqn_gamma_bar}
    \bar{\gamma}_k=\frac{\vert \qh_k^{\rm{H}} \qW \qE_{kk}\vert^2}{\sum_{i\in \mathcal{K}_k} \vert \qh_k^{\rm{H}} \qW \qE_{ki} \vert^2 + \sigma^2},~\forall k \in \mathcal{K}.
\end{equation}
Note that $\bar{\gamma}_k$ in \eqref{eqn_gamma_bar} is the same as the SINR in \eqref{eqn_gamma}. To convert $\text{(P1)}$ to $\text{(P2)}$, we combine all beamforming vectors into a single matrix and employ an identity matrix. Thus, $\text{(P2)}$ is identical to $\text{(P1)}$. Furthermore, as sum-log problems are challenging, we utilize the Lagrangian dual transform to move $\bar{\gamma}_k$ to the outside of $\log_2\left(1 +  \bar{\gamma}_k \right)$. Thereby, it converts the original problem into an equivalent version in which $\qW$ is a solution to $\text{(P2)}$ only if it is also a solution to $\text{(P3)}$ \cite[\textit{Theorem 3}]{Shen2018FPpart2}. In addition, the optimal objective values of these two problems are equal \cite{Shen2018FPpart2}.
To this end, introducing $\mu_k$ to replace each SINR term in \eqref{P2}, such that $\mu_k \leq \bar{\gamma}_k$, we reformulate $\text{(P2)}$ as follows \cite{Shen2018}:
\begin{subequations}
\begin{align}\label{P3}
    \text{(P3)}:~& \max_{\qW, \boldsymbol{\mu}} ~f(\mathbf{W}, \boldsymbol{\mu}) = \frac{1}{\ln(2)} \sum_{k\in \mathcal{K}} \ln(1 + \mu_k)  \nonumber\\
    &\hspace{10mm} + \frac{1}{\ln(2)} \sum_{k\in \mathcal{K}} \left( - \mu_k + \frac{(1 + \mu_k)\bar{\gamma}_k}{1 + \bar{\gamma}_k} \right) ,  \\
    \text{s.t} \quad & \eqref{P1_beamgain}-\eqref{P1_tx_pow}. 
\end{align}
\end{subequations}
where $\boldsymbol{\mu} = [\mu_1, \mu_2, \dots, \mu_K]$ is the auxiliary variable vector introduced by FP. The reformulated $\text{(P3)}$ can be considered as a two-part optimization problem, i.e., (i) an outer optimization over $\qW$ with fixed $\boldsymbol{\mu}$ and (ii) an inner optimization over $\boldsymbol{\mu}$ with fixed $\qW$.
To solve $\text{(P3)}$, we adopt an iterative approach where $\qW$ and $\boldsymbol{\mu}$ are optimized alternately until the convergence of the objective function is achieved.

\subsection{Optimization of $\boldsymbol{\mu}$ with Fixed $\qW$}
In each step of the iterative process, the auxiliary variable $\boldsymbol{\mu}$ is first updated based on the values of $\qW$ from the previous iteration. Specifically, $f(\mathbf{W}, \boldsymbol{\mu})$ is a concave differentiable function over $\boldsymbol{\mu}$ when $\qW$ is held fixed, so $\boldsymbol{\mu}$ can be optimally determined  by setting each $\frac{\partial f(\mathbf{W}, \boldsymbol{\mu})}{\partial \mu_k}$ to zero. Accordingly, the update rule for $\boldsymbol{\mu}$ is given by \cite{Shen2018}
\begin{equation}\label{FP_uprule}
\mu_k^* =\bar{\gamma}_k,~\forall k \in \mathcal{K}.
\end{equation}
By substituting $\boldsymbol{\mu}^*$ back in $f(\mathbf{W}, \boldsymbol{\mu})$, we can recover the weighted sum-of-logarithms objective function in $\text{(P2)}$ exactly. This highlights that $\text{(P3)}$ is equivalent to the original problem $\text{(P1)}$. Consequently, to  optimize $\qW$, we can simplify the objective function of $\text{(P3)}$ as follows \cite{Shen2018}:
\begin{subequations}
\begin{align}\label{P4}
    \text{(P4)}:~& \max_{\qW} \quad  \sum_{k\in \mathcal{K}} \frac{\hat{\gamma}_k \vert  \qh_k^{\rm{H}} \qW \qE_{kk}\vert^2}{\sum_{i\in \mathcal{K}} \vert \qh_k^{\rm{H}} \qW \qE_{ki} \vert^2 + \sigma^2} ,  \\
    \text{s.t} \quad & \eqref{P1_beamgain}-\eqref{P1_tx_pow}, 
\end{align}
\end{subequations}
where $\hat{\gamma}_k = 1+\mu_k$ for $k\in \mathcal{K}$. 
We stress that the final version  $\text{(P4)}$ and the original problem $\text{(P1)}$ are equivalent, and transformations cause no performance reduction. This is also evident from the simulation results. 

\subsection{Optimization on a Manifold} 
Herein, we solve  $\text{(P4)}$ based on MO to obtain the BS transmit beamforming vectors. To begin with, we normalize the power constraint such that the total power is constrained by $\text{Tr}(\mathbf{W}\mathbf{W}^{\rm{H}}) \leq 1$. We then introduce a modified matrix $\mathbf{\tilde{W}}$, composed of columns $\{\mathbf{\tilde{w}}_1, \mathbf{\tilde{w}}_2, \dots, \mathbf{\tilde{w}}_K\}$, which meets the condition $\text{Tr}(\mathbf{\tilde{W}}\mathbf{\tilde{W}}^{\rm{H}}) = \text{Tr}(\mathbf{W}\mathbf{W}^{\rm{H}}) + ||\mathbf{z}||_2^2 = 1$. Here, each column $\mathbf{\tilde{W}}_k$ is defined as $\mathbf{\tilde{w}}_k = [\mathbf{w}_k^{\rm{T}}, z_k]^{\rm{T}}$, incorporating an additional element $z_k$ from the auxiliary vector $\mathbf{z} = [z_1, z_2, \dots, z_K]$. This auxiliary vector simplifies power normalization without changing the constraint. We next  define a complex sphere manifold as follows:
\begin{equation}\label{eqn_M}
\mathcal{M} = \left\{ \mathbf{\tilde{W}} \in \mathbb{C}^{(M+1) \times K} \:|\: \text{Tr}(\mathbf{\tilde{W}}\mathbf{\tilde{W}}^{\rm{H}}) = 1 \right\}.
\end{equation}
Thus,  $\text{(P4)}$ can be recast as a standard unconstrained optimization problem on  manifold $\mathcal{M}$: 
\begin{subequations}
\begin{align}\label{P5}
    \text{(P5)}:~& \min_{\mathbf{\tilde{W}} \in \mathcal{M} } \quad \hat{f}(\mathbf{\tilde{W}} ) = -\sum_{k\in \mathcal{K}} \frac{\hat{\gamma}_k \vert  \mathbf{\hat{h}}_k^{\rm{H}} \mathbf{\tilde{W}}  \qE_{kk}\vert^2}{\sum_{i\in \mathcal{K}} \vert \mathbf{\hat{h}}_k^{\rm{H}} \mathbf{\tilde{W}}  \qE_{ki} \vert^2 + \sigma^2} ,  \\
    \text{s.t} \quad &  \hat{g}_n(\mathbf{\tilde{W}} ) \!=\!\Gamma_{n}^{\thr} - \hat{\qa}(\theta_n)^{\rm{H}} \mathbf{\tilde{W}}  \mathbf{\tilde{W}} ^{\rm{H}}  \hat{\qa}(\theta_n) \leq 0 ,~\forall n \in \mathcal{N}, \label{eqn_P5_sens}
\end{align}
\end{subequations}
where $\mathbf{\hat{h}}_k = \sqrt{P}[\mathbf{h}_k, 0]$ and $\hat{\qa}(\theta_n) = \sqrt{P}[\qa(\theta_n), 0]$ serve to adjust the dimensionality and scaling.

\begin{figure*}[!t]
    \centering 
    \def\svgwidth{470pt} 
    \fontsize{9}{9}\selectfont 
    \graphicspath{{}}
    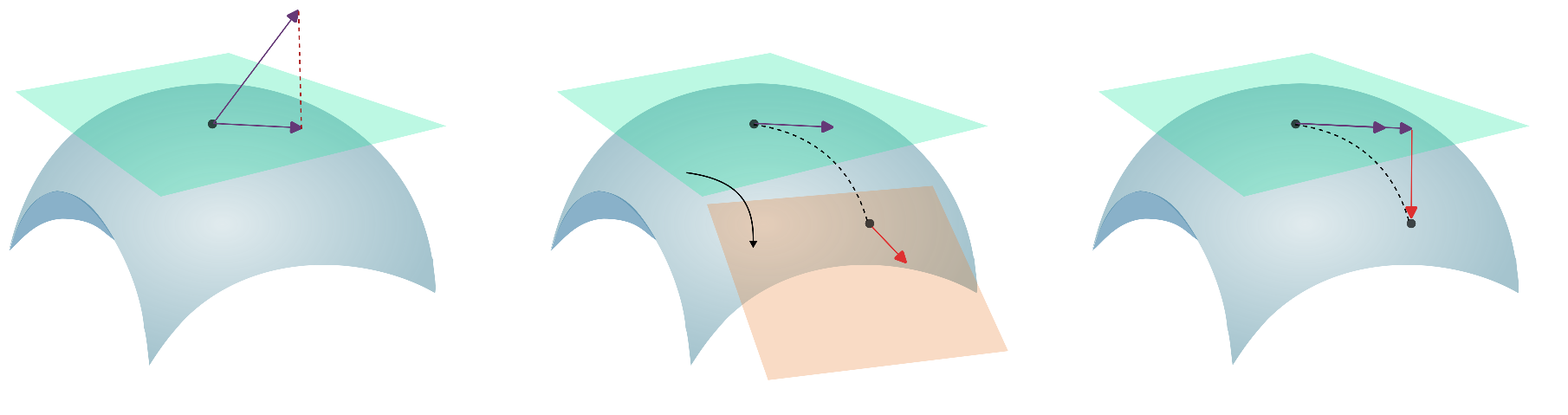  
    \caption{Key steps in MO.}  \label{fig_Manifold}\vspace{-4mm}
\end{figure*}

In $\text{(P5)}$, $\mathcal{M}$ represents a Riemannian manifold, and $\hat{f}(X)$ and $\hat{g}_n(X)$ are functions from $\mathcal{M}$ to $\mathbb{R}$ that are twice continuously differentiable. The constraint \eqref{eqn_P5_sens}, however,  poses a unique challenge, demanding a specialized solution approach. We turn to the augmented Lagrangian method (ALM) \cite{Birgin2014book} to address this. ALM enhances the classical Lagrangian technique by incorporating a penalty term, which effectively manages constraint violations and aids in converging to optimal solutions. It defines an augmented Lagrangian function, which merges the primary objective function with a penalty mechanism. This mechanism penalizes deviations from constraints and dynamically adjusts the penalty severity, ensuring a robust and balanced constraint management strategy. The cost function is expressed as follows \cite{liu2020simple, Birgin2014book}:
\begin{equation}\label{Lag_penalty}
    \!\!\!\!\mathcal{L}_\rho(\mathbf{\tilde{W}} , \boldsymbol{\lambda}) = \hat{f}(\mathbf{\tilde{W}} ) + \frac{\rho}{2} \sum_{n \in \mathcal{N}} \!\max\left\{0, \frac{\lambda_n}{\rho} + {\hat{g}_n(\mathbf{\tilde{W}} )}\right\}^2\!,
\end{equation}
where $\rho > 0$ is a penalty parameter and $\boldsymbol{\lambda} \in \mathbb{R}^{N}$ with $\boldsymbol{\lambda}  \geq 0$. Thus, $ \rho $ controls the penalty imposed for violating the constraints of the optimization problem \cite{Birgin2014book}. 
Setting $\rho$ too high can introduce numerical instabilities, while a value set too low may fail to penalize constraint violations adequately. Initially, the penalty parameter is set and then adjusted based on improvements in constraint satisfaction. If the improvement surpasses a specific threshold, the penalty remains unchanged; otherwise, it increases to more rigorously enforce constraints. Lagrange multipliers are initialized with small positive values and iteratively updated to gradually enforce constraints, aiming for a solution that optimally balances both constraints and objectives. The ALM alternates between optimizing $\mathbf{\tilde{W}}$ for a fixed $\boldsymbol{\lambda}$ using the MO method and updating $\boldsymbol{\lambda}$ via a gradient-based rule \cite{Birgin2014book}.

Note that $\mathbf{\tilde{W}} $ is always confined to manifold $\mathcal{M}$ during this process. Thus,  $\mathcal{L}_\rho(\mathbf{\tilde{W}}, \boldsymbol{\lambda})$ is differentiable over  $\mathbf{\tilde{W}} $, allowing the ALM framework to be directly applied.  This approach is called ALMO \cite{liu2020simple}.

\subsubsection{\textbf{Manifold optimization}}
We show that \eqref{Lag_penalty} can be efficiently solved on a  Riemannian manifold. 

Before that, let us first briefly provide some background.   The objective is to minimize a smooth function $\mathcal{L}_\rho(\mathbf{\tilde{W}}, \boldsymbol{\lambda}): \mathcal{M} \rightarrow \mathbb{R}$. This metric and the associated geometric structure enable the definition of notions like gradients and Hessians in the Riemannian context.  The Riemannian conjugate gradient (RCG) method adapts the classical CG method to optimize Riemannian manifolds \cite{liu2020simple, Absil2009OptimizationAO}. It is useful for problems where the optimization domain is not a vector space but a manifold. 

The key components of the RCG method include:
\begin{itemize}
    \item \textit{Tangent space ($T_{\mathbf{\tilde{W}}_t}\mathcal{M}$):} At each point $\mathbf{\tilde{W}}_t$ on the manifold $\mathcal{M}$, there is an associated tangent space $T_{\mathbf{\tilde{W}}_t}\mathcal{M}$, which is a vector space consisting of all tangent vectors at point $\mathbf{\tilde{W}}_t$. This space can be considered as the linear approximation of the manifold at $\mathbf{\tilde{W}}_t$, allowing us to perform vector space operations locally.
	
    \item \textit{Riemannian gradient (${\rm{grad}}_{\mathbf{\tilde{W}}_t} \mathcal{L}_\rho(\mathbf{\tilde{W}}, \boldsymbol{\lambda})$):} The Riemannian gradient of $\mathcal{L}_\rho(\mathbf{\tilde{W}}, \boldsymbol{\lambda})$ at a point $\mathbf{\tilde{W}}_t \in \mathcal{M}$ is a generalization of the Euclidean gradient, residing in the tangent space $T_{\mathbf{\tilde{W}}_t}\mathcal{M}$. It points in the direction of the steepest ascent of $\mathcal{L}_\rho(\mathbf{\tilde{W}}, \boldsymbol{\lambda})$ at $\mathbf{\tilde{W}}_t$.

    \item \textit{Retraction ($\mathcal{R}_{\mathbf{\tilde{W}}_t}$):} To move along a direction on the manifold, one cannot simply add a vector as in Euclidean spaces due to the curvature of the manifold. The retraction $\mathcal{R}_{\mathbf{\tilde{W}}_t}: T_{\mathbf{\tilde{W}}_t}\mathcal{M} \rightarrow \mathcal{M}$ is a mapping from the tangent space back to the manifold, allowing for the update of points along a search direction in the tangent space.
	
    \item \textit{Search direction ($\eta_{\mathbf{\tilde{W}}_t}$):} Similar to the Euclidean CG method, the search direction in the Riemannian setting is a tangent vector in the tangent space at the current point $\mathbf{\tilde{W}}_t$. It is computed in a way that ensures conjugacy of the directions for the Riemannian metric, aiming to achieve efficient descent towards the minimum.
	
    \item \textit{Step size ($\alpha_t$):} The step size determines how far along the search direction one should move to update the current point on the manifold. It is typically chosen to satisfy certain conditions that guarantee a decrease in the function value.
	
    \item \textit{Vector transport ($\mathcal{T}_{\mathbf{\tilde{W}}_t \rightarrow \mathbf{\tilde{W}}_{t+1}}$):} To maintain conjugacy and effectively use information from previous iterations, it is necessary to transport vectors from one tangent space to another as we move on the manifold. Vector transport $\mathcal{T}_{\mathbf{\tilde{W}}_t \rightarrow \mathbf{\tilde{W}}_{t+1}}: T_{\mathbf{\tilde{W}}_t}\mathcal{M} \rightarrow T_{\mathbf{\tilde{W}}_{t+1}}\mathcal{M}$ is a smooth mapping that transfers vectors in the tangent space at $\mathbf{\tilde{W}}_t$ to vectors in the tangent space at $\mathbf{\tilde{W}}_{t+1}$, preserving certain properties like vector length and orthogonality.
	
\end{itemize}

Within the RCG method, the optimization process occurs on a curved geometric space known as a Riemannian manifold $\mathcal{M}$. The method begins by evaluating the Riemannian gradient ${\rm{grad}}_{\mathbf{\tilde{W}}_t} \mathcal{L}_\rho(\mathbf{\tilde{W}} , \boldsymbol{\lambda})$ at the current point $\mathbf{\tilde{W}}_t$, which represents the direction of the steepest ascent on $\mathcal{M}$ for the objective function $\mathcal{L}_\rho(\mathbf{\tilde{W}} , \boldsymbol{\lambda})$. This gradient is inherently linked to the manifold's metric and lies within the tangent space $T_{\mathbf{\tilde{W}}_t}\mathcal{M}$, a linearized local proxy for $\mathcal{M}$ at $\mathbf{\tilde{W}}_t$. The subsequent point $\mathbf{\tilde{W}}_{t+1}$ is determined by applying a retraction mapping $R_{\mathbf{\tilde{W}}_t}$, which smoothly projects a search direction vector $\eta_{\mathbf{\tilde{W}}_t}$ from $T_{\mathbf{\tilde{W}}_t}\mathcal{M}$ onto the manifold, ensuring that the update respects the manifold's geometry. To leverage conjugate directions and maintain the efficacy of previous descent directions, the method utilizes a vector transport function $\mathcal{T}_{\mathbf{\tilde{W}}_t \rightarrow \mathbf{\tilde{W}}_{t+1}}$, which coherently transfers the conjugate search direction from the tangent space at $\mathbf{\tilde{W}}_t$ to the tangent space at $\mathbf{\tilde{W}}_{t+1}$. This results in a conjugate direction $\eta_{\mathbf{\tilde{W}}_{t+1}}$ that respects the curvature and intrinsic properties of $\mathcal{M}$.

At any point $ \mathbf{\tilde{W}}_t $ on a manifold $ \mathcal{M} $, we define the tangent space $ T_{\mathbf{\tilde{W}}_t}\mathcal{M} $, which contains all possible directions in which one can infinitely move from $ \mathbf{\tilde{W}}_t $. For a complex sphere manifold $\mathcal{M}$ \eqref{eqn_M}, 
the tangent space at $ \mathbf{\tilde{W}}_t $ is characterized by the set
\begin{equation}
	T_{\mathbf{\tilde{W}}_t}\mathcal{M} = \left\{ \mathbf{c} \in \mathbb{C}^{M+1} \mid \Re\{\mathbf{c} \circ \mathbf{\qW}^*_t\} = \mathbf{0}_{M+1} \right\},
\end{equation}
where $\mathbf{c}$ is a complex vector in $ \mathbb{C}^{M+1} $. This space is the collection of orthogonal vectors to $ \mathbf{\tilde{W}}_t $ in the sense of the complex dot product.

Within this tangent space, the vector that represents the steepest ascent of the Lagrangian function, respecting the manifold's geometry, is known as the Riemannian gradient. On the complex circle manifold $ \mathcal{M} $, which is a subset of $ \mathbb{C}^M $, the Riemannian gradient $ {\rm{grad}}_{\mathbf{\tilde{W}}_t} \mathcal{L}_\rho(\mathbf{\tilde{W}} , \boldsymbol{\lambda}) $ is the result of orthogonally projecting the standard Euclidean gradient $ \nabla_{\mathbf{\tilde{W}}_t} \mathcal{L}_\rho(\mathbf{\tilde{W}} , \boldsymbol{\lambda}) $ onto $ T_{\mathbf{\tilde{W}}_t}\mathcal{M} $. This projection is visually depicted in Fig. \ref{fig_Manifold}a and mathematically expressed as
\begin{align}\label{proj_grdman}
	{\rm{grad}}_{\mathbf{\tilde{W}}_t} \mathcal{L}_\rho(\mathbf{\tilde{W}}, \boldsymbol{\lambda}) &= \nabla_{\mathbf{\tilde{W}}_t} \mathcal{L}_\rho(\mathbf{\tilde{W}}, \boldsymbol{\lambda}) \nonumber\\&
    - \Re\{\nabla_{\mathbf{\tilde{W}}_t} \mathcal{L}_\rho(\mathbf{\tilde{W}} , \boldsymbol{\lambda}) \circ \mathbf{\tilde{W}}^*_t\}\circ \mathbf{\tilde{W}}_t,  
\end{align}
where the Euclidean gradient of \eqref{Lag_penalty} is given by \eqref{derivtive_eq}.
\begin{figure*}[!t]
\begin{align}  \label{derivtive_eq}
	\nabla_{\mathbf{\tilde{W}}_t} \mathcal{L}_\rho(\mathbf{\tilde{W}} , \boldsymbol{\lambda}) & =  \sum_{k\in \mathcal{K}} -\hat{\gamma}_k 
    \left( \frac{2\mathbf{\hat{h}}_k^{\rm{H}} \mathbf{\tilde{W}}_t \mathbf{E}_{kk} \mathbf{\hat{h}}_k \mathbf{E}_{kk}^{\rm{H}} }{\sum_{j\in \mathcal{K}} \left|\mathbf{\hat{h}}_k^{\rm{H}} \mathbf{\tilde{W}}_t \mathbf{E}_{kj}\right|^2 + \sigma_k^2} -   \sum_{i\in \mathcal{K}} \frac{2\left|\mathbf{\hat{h}}_k^{\rm{H}} \mathbf{\tilde{W}}_t \mathbf{E}_{kk}\right|^2  \mathbf{\hat{h}}_k^{\rm{H}} \mathbf{\tilde{W}}_t \mathbf{E}_{ki}\mathbf{\hat{h}}_k\mathbf{E}_{ki}^{\rm{H}}  }{\left(\sum_{j\in \mathcal{K}} \left|\mathbf{\hat{h}}_k^{\rm{H}} \mathbf{\tilde{W}}_t \mathbf{E}_{kj}\right|^2 + \sigma_k^2\right)^2} \right) \nonumber \\
    & - 2 \rho \sum_{n \in \mathcal{N}}   \mathbf{1}_{\left\{\lambda_n + \frac{\hat{g}_n(\mathbf{\tilde{W}} )}{\rho}\right\}} \left(\frac{\lambda_n}{\rho} + {\hat{g}_n(\mathbf{\tilde{W}} )}\right)\hat{\qa}(\theta_n) \hat{\qa}(\theta_n)^{\rm{H}} \mathbf{\tilde{W}}
\end{align}	
\hrulefill
\vspace{-3mm}
\end{figure*}
Using the Riemannian gradient, we can adapt optimization techniques from Euclidean spaces to the context of MO. The CG method's search direction update rule in Euclidean space is given by
\begin{equation}
	\boldsymbol{\eta}_{t+1} = -\nabla_{\mathbf{\tilde{W}}_{t+1}} \mathcal{L}_\rho(\mathbf{\tilde{W}} , \boldsymbol{\lambda}) + \beta_t \boldsymbol{\eta}_{t}, 
\end{equation}
where $ \boldsymbol{\eta}_{t} $ is the current search direction, and $ \beta_t $ is computed using the Hestenes-Stiefel approach \cite{Shewchuk1994}. It involves selecting the search direction by combining the steepest descent direction with a previous direction, aiming to minimize the residual. By leveraging conjugacy between search directions, this approach accelerates convergence  \cite{Shewchuk1994}.

\begin{algorithm}[!t]
\caption{Manifold Conjugate Gradient Optimization}
\begin{algorithmic}[1]
\label{alg:CG}
    \STATE \textbf{Input}: Construct an initial point $\mathbf{\tilde{W}}^{(0)}$,  set the convergence
    tolerance $\delta_1>0$, and set $t = 0$.
    \STATE Obtain the Riemannian gradient $\boldsymbol{\eta}_0 = -{\rm{grad}}_{\mathbf{\tilde{W}}_0} \mathcal{L}_\rho(\mathbf{\tilde{W}}, \boldsymbol{\lambda})$ according to \eqref{proj_grdman}.
    \WHILE{$\|{\rm{grad}}_{\mathbf{\tilde{W}}_t} \mathcal{L}_\rho(\mathbf{\tilde{W}}, \boldsymbol{\lambda})\|_2 > \delta_1$}
        \STATE Determine the Armijo backtracking line search step size $\alpha_t$ according to \cite{Shewchuk1994}.
        \STATE Update $\mathbf{\tilde{W}}_{t+1}$ using retraction $R_{\mathbf{\tilde{W}}_t}(\alpha_t\boldsymbol{\eta}_t)$ as described in \eqref{retraction_mapping}.
        \STATE Compute the Riemannian gradient at the new point ${\rm{grad}}_{\mathbf{\tilde{W}}_{t+1}} \mathcal{L}_\rho(\mathbf{\tilde{W}}, \boldsymbol{\lambda})$ according to \eqref{proj_grdman}.
        \STATE Calculate the vector transport $\mathcal{T}_{\mathbf{\tilde{W}}_t \rightarrow \mathbf{\tilde{W}}_{t+1}}(\boldsymbol{\eta}_t)$ using \eqref{transport_op}.
        \STATE Compute the Hestenes-Stiefel parameter $\beta_t$ according to \cite{Shewchuk1994}.
        \STATE Update the conjugate gradient direction based on \eqref{search_direction}.
        \STATE $t \leftarrow t + 1$
    \ENDWHILE
    \STATE \textbf{Output}:  $ \mathbf{\tilde{W}}^{*}$.
\end{algorithmic}
\end{algorithm}

However, since $ \boldsymbol{\eta}_{t} $ and $ \boldsymbol{\eta}_{t+1} $ belong to different tangent spaces, a process known as vector transport is necessary. This process maps a vector from $ T_{\mathbf{\tilde{W}}_t}\mathcal{M} $ to $ T_{\mathbf{\tilde{W}}_{t+1}}\mathcal{M} $, respecting the manifold's geometry, as illustrated in Fig. \ref{fig_Manifold}b. The transport operation is defined as
\begin{equation}\label{transport_op}
	\mathcal{T}_{\mathbf{\tilde{W}}_t \rightarrow \mathbf{\tilde{W}}_{t+1}}(\boldsymbol{\eta}_{t}) = \boldsymbol{\eta}_{t} - \Re\{\boldsymbol{\eta}_{t} \circ \mathbf{\tilde{W}}^*_{t+1}\} \circ \mathbf{\tilde{W}}_{t+1},
\end{equation}
and the updated search direction for the CG method on manifolds becomes
\begin{equation}\label{search_direction}
	\boldsymbol{\eta}_{t+1} = -{\rm{grad}}_{\mathbf{\tilde{W}}_{t+1}} \mathcal{L}_\rho(\mathbf{\tilde{W}} , \boldsymbol{\lambda}) + \beta_t \mathcal{T}_{\mathbf{\tilde{W}}_t \rightarrow \mathbf{\tilde{W}}_{t+1}}(\boldsymbol{\eta}_{t}).
\end{equation}
After the search direction $ \boldsymbol{\eta}_{t} $ is established at $ \mathbf{\tilde{W}}_t $, we employ a retraction to map the direction back onto the manifold to find the next point $ \mathbf{\tilde{W}}_{t+1} $, as demonstrated in Fig. \ref{fig_Manifold}c. The retraction is a smooth mapping, which is defined as
\begin{equation}\label{retraction_mapping}
	\mathcal{R}_{\mathbf{\tilde{W}}_t}(\alpha_t \boldsymbol{\eta}_{t}) = \text{unt}(\alpha_t \boldsymbol{\eta}_{t}),
\end{equation}
where $ \alpha_t $ is the step size along the direction $ \boldsymbol{\eta}_{t} $. 

With these operations, the MO proceeds iteratively and is designed to converge to a critical point of \eqref{Lag_penalty}, that is, a point where the Riemannian gradient vanishes. Finally, we can optimize $\mathbf{\tilde{W}}$ by applying Algorithm \ref{alg:CG} and $\mathbf{W}$ can be obtained via $\mathbf{W} = \mathbf{\tilde{W}}(1:M, K)$.

\begin{algorithm}[!t]
\caption{Augmented Lagrangian Manifold Optimization (ALMO)}
\begin{algorithmic}[1] \label{alg:ALMO}
    \STATE \textbf{Require}: Riemannian manifold $\mathcal{M}$, twice continuously differentiable functions $\hat{f}(\mathbf{\tilde{W}})$, $\{\hat{g}_n(\mathbf{\tilde{W}})\}_{n \in \mathcal{N}}$: $\mathcal{M} \rightarrow \mathbb{R}$.
    \STATE \textbf{Initialization}: Initial point $\mathbf{\tilde{W}}_0 \in \mathcal{M}$, Lagrange multipliers $\boldsymbol{\lambda}^0 \in \mathbb{R}^{N}$, accuracy tolerance $\epsilon_{\min}$, initial accuracy $\epsilon_0 > 0$, initial penalty factor $\rho_0$, reduction factors $\theta_\epsilon \in (0, 1)$ and $\theta_\rho > 1$, boundaries for the multipliers $\lambda^{\min}_{n}, \lambda^{\max}_{n} \in \mathbb{R}$ ensuring $\lambda^{\min}_{n} \leq \lambda^{\max}_{n}$, ratio $\tau \in (0, 1)$, and a minimum acceptable distance $d_{\min}$.
    \WHILE{$\text{dist}(\mathbf{\tilde{W}}_t, \mathbf{\tilde{W}}_{t+1}) \geq d_{\min}$ or $\epsilon_t > \epsilon_{\min}$}
    \STATE Calculate the transmit beamforming matrix  $\mathbf{\tilde{W}}_{t+1}$ according to \textbf{Algorithm} \ref{alg:CG}.
    \STATE Update the Lagrange multiplier using \eqref{lagmult_1}.
    \STATE Set $\sigma_{n}^{t+1} = \max \left\{ {\hat{g}_n(\mathbf{\tilde{W}}_{t+1})}, -\frac{\lambda_{n}^{t+1}}{\rho_t} \right\}, \forall n$.
    \STATE Adjust the accuracy tolerance $\epsilon_{t+1} = \max \{\epsilon_{\min}, \theta_\epsilon \epsilon_t\}$.
    \IF{$t = 0$ or $\underset{n}{\max} \{|\sigma_{n}^{t+1}| \} \leq \tau \underset{n}{\max} \{ |\sigma_{n}^{t}| \}$}
    \STATE Maintain the current penalty value: $\rho_{t+1} = \rho_t$.
    \ELSE
    \STATE Increment the penalty value: $\rho_{t+1} = \theta_\rho \rho_t$.
    \ENDIF
    \STATE $t \leftarrow t + 1$
    \ENDWHILE
    \STATE \textbf{Output}: $\mathbf{\tilde{W}}^*$, $\boldsymbol{\lambda}^*$.
\end{algorithmic}
\end{algorithm}

\subsubsection{\textbf{Updating the Lagrange multipliers}}
After optimizing the optimization variable $\mathbf{\tilde{W}}$ on the Riemannian manifold $\mathcal{M}$, we update the Lagrange multipliers $\boldsymbol{\lambda}$ to reflect progress towards satisfying the constraints. This update incorporates clipping or safeguards to introduce bounds on the multipliers and ensure updates contribute positively towards resolving constraint violations. The updated rule for each Lagrange multiplier at iteration $t$ is given by
\begin{equation}\label{lagmult_1}
\lambda_n^{t+1} = \text{clip}_{[\lambda^{\min}_{n}, \lambda^{\max}_{n}]}\left(\lambda_n^{t} + \rho_t \hat{g}_n(\mathbf{\tilde{W}}_{t+1})\right),~\forall n \in \mathcal{N},
\end{equation}
where $\rho_t > 0$ is the penalty parameter at iteration $t$.  Specifically, clipping confines each Lagrange multiplier, $\lambda_n^{t+1}$, within a specific range determined by minimum and maximum values, denoted as $[\lambda^{\min}_{n}, \lambda^{\max}_{n}]$ \cite{liu2020simple}. This limitation prevents the multipliers from expanding without bounds, ensuring the optimization process remains stable and controlled. In addition, the multipliers are only updated if there is a sufficient reduction in the violation of constraints.  This approach avoids updating too soon or aggressively, mitigating ill-conditioning effects and instability. The overall ALMO algorithm is provided in Algorithm \ref{alg:ALMO}. The core iterative process involves recalculating the transmit beamforming vector, updating the Lagrange multipliers according to whether the constraints are met, and adjusting the penalty value based on the progress toward meeting the accuracy tolerance and constraint conditions.  The iteration continues until the solution stabilizes within a defined minimum distance $d_{\min}$ and the accuracy tolerance.

The optimization process for solving $\text{(P2)}$, named iterative manifold-based optimization (IMBO) is summarized in Algorithm \ref{alg:final}. The convergence of Algorithm \ref{alg:final} is assured by its monotonic nature, combined with the upper bound imposed on the objective function. These two factors provide a sufficient condition for establishing that the algorithm will converge. Moreover, the computational complexity of Algorithm \ref{alg:final} is primarily due to the Algorithm \ref{alg:CG} iterations. Consequently, the per-iteration complexity is $\mathcal{O}(MK+M K^3)$. Considering $T$ iterations for convergence, the total complexity is approximated as $\mathcal{O}(T (MK+M K^3))$.

\begin{algorithm}[t]
\caption{Iterative Manifold-Based Optimization (IMBO) Method}
\begin{algorithmic}[1]\label{alg:final}
    \STATE \textbf{Input}: Initial candidate $\mathbf{\tilde{W}}^{(0)}$ within the manifold $\mathcal{M}$, the convergence
    tolerance $\delta_2>0$.
    \WHILE{$\text{dist}(\hat{f}(\mathbf{\tilde{W}}_t), \hat{f}(\mathbf{\tilde{W}}_{t+1})) \geq \delta_2$}
    \STATE Calculate $\gamma$ according to the update rule defined in \eqref{FP_uprule}.
    \STATE Obtain $\mathbf{\tilde{W}}^{(k+1)}$ based on ALMO method using \textbf{Algorithm} \ref{alg:ALMO}.
    \STATE $t \leftarrow t + 1$
    \ENDWHILE
    \STATE \textbf{Output}: Transmit beamforming matrix, $\mathbf{W}^* = \mathbf{\tilde{W}}^{*}(1:M, K)$.
\end{algorithmic}
\end{algorithm}

\section{Convergence Analysis of ALMO Algorithm}
We next analyze convergence conditions of the ALMO algorithm for  $\text{(P5)}$. Specifically, we examine whether it converges to a global minimizer of $\text{(P5)}$, assuming that Algorithm \ref{alg:CG} operates with global accuracy.
\begin{prop}\label{pro1}
Let the ALMO algorithm be run with $\epsilon_{\min} = 0$, generating an infinite sequence $\{\epsilon_t\}$ converging to zero. At each iteration $t$, let Algorithm \ref{alg:CG} identify a candidate solution $\mathbf{\tilde{W}}_{t+1}$ that meets the following condition:
\begin{equation}\label{eq:pro3}
\mathcal{L}_{\rho_t}(\mathbf{\tilde{W}}_{t+1}, \boldsymbol{\lambda}_t ) \leq \mathcal{L}_{\rho_t}(\mathbf{\tilde{W}}_t, \boldsymbol{\lambda}_t) + \epsilon_t,
\end{equation}
where $\mathbf{\tilde{W}}_{t+1}$ denotes a feasible global minimizer of $\text{(P5)}$. If there exists a limit point $\mathbf{\tilde{W}}^*$ within the sequence $\{\mathbf{\tilde{W}}_t\}^{\infty}_{t=0}$ produced by the ALMO algorithm, then $\mathbf{\tilde{W}}^*$ also represents a global minimizer of $\text{(P5)}$.
\end{prop}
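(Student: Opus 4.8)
The plan is to adapt the global-convergence analysis of the Riemannian augmented Lagrangian method \cite{liu2020simple} to $\text{(P5)}$. Three structural facts set the stage: the manifold $\mathcal{M}$ in \eqref{eqn_M} is a unit sphere in $\mathbb{C}^{(M+1)\times K}$ and hence compact; $\hat{f}$ and the $\hat{g}_n$ are smooth, hence continuous; and the feasible set of $\text{(P5)}$ is closed and nonempty (it contains the assumed feasible global minimizer). Thus a feasible global minimizer $\bar{\mathbf{W}}$ exists with optimal value $\hat{f}^\star=\hat{f}(\bar{\mathbf{W}})$, $\hat{f}$ is bounded below on $\mathcal{M}$, and the multiplier iterates stay in the compact box $\prod_{n}[\lambda_{n}^{\min},\lambda_{n}^{\max}]$ with $\lambda_{n}^{\min}\ge 0$ by the clipping in \eqref{lagmult_1}. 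I would then fix a subsequence along which $\mathbf{\tilde{W}}_{t_j+1}\to\mathbf{\tilde{W}}^*$; by the global-accuracy hypothesis each $\mathbf{\tilde{W}}_{t_j+1}$ is an $\epsilon_{t_j}$-accurate \emph{global} minimizer of $\mathcal{L}_{\rho_{t_j}}(\cdot,\boldsymbol{\lambda}_{t_j})$ over $\mathcal{M}$, so that for every $\mathbf{W}\in\mathcal{M}$,
\begin{equation}\label{eq:pro1proof}
\mathcal{L}_{\rho_{t_j}}(\mathbf{\tilde{W}}_{t_j+1},\boldsymbol{\lambda}_{t_j})\le\mathcal{L}_{\rho_{t_j}}(\mathbf{W},\boldsymbol{\lambda}_{t_j})+\epsilon_{t_j},
\end{equation}
in particular for $\mathbf{W}=\bar{\mathbf{W}}$.

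Next I would bound the penalty term in $\mathcal{L}_{\rho_{t_j}}(\bar{\mathbf{W}},\boldsymbol{\lambda}_{t_j})$: since $\hat{g}_n(\bar{\mathbf{W}})\le 0$ and $\lambda_{n}^{t}\ge 0$, it is at most $\tfrac{1}{2\rho_{t_j}}\sum_{n}(\lambda_{n}^{t_j})^2\le C/\rho_{t_j}$ for a fixed $C$; combined with $\mathcal{L}_{\rho}(\mathbf{\tilde{W}},\boldsymbol{\lambda})\ge\hat{f}(\mathbf{\tilde{W}})$ and the lower bound on $\hat{f}$, \eqref{eq:pro1proof} shows the penalty term at $\mathbf{\tilde{W}}_{t_j+1}$ stays uniformly bounded. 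If $\rho_{t_j}\to\infty$ this forces $\max\{0,\hat{g}_n(\mathbf{\tilde{W}}_{t_j+1})\}\to 0$, so $\hat{g}_n(\mathbf{\tilde{W}}^*)\le 0$ by continuity; dropping the nonnegative penalty on the left of \eqref{eq:pro1proof} and letting $j\to\infty$ (so $\epsilon_{t_j},C/\rho_{t_j}\to0$) gives $\hat{f}(\mathbf{\tilde{W}}^*)\le\hat{f}^\star$, hence $\mathbf{\tilde{W}}^*$ is feasible and globally optimal. If instead $\{\rho_t\}$ stays bounded, the ``else'' branch of Algorithm \ref{alg:ALMO} is taken only finitely often, so $\rho_t\equiv\rho^*$ eventually and the ratio test forces $\max_{n}|\sigma_{n}^{t}|\to 0$; since $\sigma_{n}^{t}\ge\hat{g}_n(\mathbf{\tilde{W}}_t)$ this again gives feasibility of $\mathbf{\tilde{W}}^*$, and — arguing from the max-structure of $\sigma_{n}^{t}=\max\{\hat{g}_n(\mathbf{\tilde{W}}_t),-\lambda_{n}^{t}/\rho^*\}$ — it also yields the approximate complementary-slackness relation $\lambda_{n}^{t}\,\hat{g}_n(\mathbf{\tilde{W}}_t)\to 0$, so that $\lambda_{n}^\infty=0$ whenever $\hat{g}_n(\mathbf{\tilde{W}}^*)<0$ for any subsequential multiplier limit $\boldsymbol{\lambda}^\infty$.

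For the bounded-penalty case one then passes to the limit directly in \eqref{eq:pro1proof}: joint continuity of $\mathcal{L}_{\rho^*}$ together with $\epsilon_{t_j}\to0$ show that $\mathbf{\tilde{W}}^*$ globally minimizes $\mathcal{L}_{\rho^*}(\cdot,\boldsymbol{\lambda}^\infty)$ over $\mathcal{M}$, whence $\mathcal{L}_{\rho^*}(\mathbf{\tilde{W}}^*,\boldsymbol{\lambda}^\infty)\le\mathcal{L}_{\rho^*}(\bar{\mathbf{W}},\boldsymbol{\lambda}^\infty)\le\hat{f}^\star+\tfrac{1}{2\rho^*}\sum_{n}(\lambda_{n}^\infty)^2$; on the other hand, evaluating $\mathcal{L}_{\rho^*}(\mathbf{\tilde{W}}^*,\boldsymbol{\lambda}^\infty)$ via complementary slackness (its $n$-th penalty summand is $0$ when $\lambda_{n}^\infty=0$ and $(\lambda_{n}^\infty)^2/(2\rho^*)$ when $\hat{g}_n(\mathbf{\tilde{W}}^*)=0$) and using $\lambda_{n}^\infty=0$ whenever $\hat{g}_n(\mathbf{\tilde{W}}^*)<0$, the leftover multiplier terms cancel and leave $\hat{f}(\mathbf{\tilde{W}}^*)\le\hat{f}^\star$; feasibility then forces equality, so $\mathbf{\tilde{W}}^*$ is a global minimizer of $\text{(P5)}$. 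I expect this bounded-penalty regime to be the main obstacle: unlike the $\rho_t\to\infty$ case, where feasibility and optimality both drop out of a one-line limit in \eqref{eq:pro1proof}, here one must extract complementary slackness from the safeguard test on $\{\sigma_{n}^{t}\}$, reconcile the iterate/multiplier indices appearing in \eqref{eq:pro1proof} versus in that test (their subsequential limits coincide because each multiplier increment $\rho^*\hat{g}_n(\mathbf{\tilde{W}}_t)$ either tends to zero or drives the multiplier onto a clipping bound, where it then remains), and only then cancel the residual penalty terms; the remaining technical details follow \cite{liu2020simple}.
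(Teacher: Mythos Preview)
Your proposal is correct and follows essentially the same approach as the paper: the paper's proof simply invokes Theorems~1 and~2 of \cite{birgin2010global} (the Euclidean global-convergence theory for ALM), and your argument is precisely the Riemannian instantiation of that analysis via \cite{liu2020simple}, carried out in full detail with the standard $\rho_t\to\infty$ / $\rho_t$ bounded dichotomy. In effect you have supplied the content that the paper delegates to a citation; nothing further is needed.
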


\begin{proof}
The convergence to a global minimizer can be established by employing the results from Theorem 1 and Theorem 2 in \cite{birgin2010global}, which indicate that the limit point of the sequence generated by ALMO satisfies the global optimality criteria for $\text{(P5)}$. Because the ALMO algorithm ensures a non-increasing sequence of augmented Lagrangian values, condition \eqref{eq:pro3} ensures that each iterate is sufficiently close to the global solution in terms of the augmented Lagrangian function.
\end{proof}

While global optimality may be challenging, we can often ensure convergence to stationary points. The following proposition addresses the first-order convergence scenario.
\begin{prop}
Given the ALMO algorithm with $\epsilon_{\min} = 0$, if at each iteration $t$, the Algorithm \ref{alg:CG} provides a point $\mathbf{\tilde{W}}_{t+1}$ satisfying
\begin{equation}\label{eq:12}
    \left\| {\rm{grad}}_{\mathbf{\tilde{W}}} \mathcal{L}_{\rho_t}(\mathbf{\tilde{W}}_{t+1}, \boldsymbol{\lambda}_t) \right\| \leq \epsilon_t,
\end{equation}
and the sequence $\{\mathbf{\tilde{W}}_t\}^{\infty}_{t=0}$ has a limit point $\mathbf{\tilde{W}}^*$ within the feasible set of $\text{(P2)}$, then $\mathbf{\tilde{W}}^*$ satisfies the Karush-Kuhn-Tucker (KKT) conditions of $\text{(P5)}$.
\end{prop}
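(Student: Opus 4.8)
The plan is to mirror the structure of the classical augmented Lagrangian convergence theory (Birgin--Mart\'inez) adapted to the Riemannian setting, as in \cite{liu2020simple}. I would proceed in four steps. First, set up notation: let $\{\mathbf{\tilde{W}}_t\}$ be the sequence generated by ALMO, let $\mathbf{\tilde{W}}^*$ be a limit point lying in the feasible set of $\text{(P5)}$ (equivalently $\text{(P2)}$), and pass to a subsequence $\{\mathbf{\tilde{W}}_{t}\}_{t \in \mathcal{S}}$ with $\mathbf{\tilde{W}}_{t} \to \mathbf{\tilde{W}}^*$. Recall from \eqref{Lag_penalty} that the Riemannian gradient of $\mathcal{L}_{\rho_t}$ is the projection onto $T_{\mathbf{\tilde{W}}_t}\mathcal{M}$ of the Euclidean gradient in \eqref{derivtive_eq}, which expands as $\mathrm{grad}\,\hat f(\mathbf{\tilde{W}}_{t+1}) + \sum_{n\in\mathcal{N}} \bar\lambda_n^{t+1}\,\mathrm{grad}\,\hat g_n(\mathbf{\tilde{W}}_{t+1})$, where $\bar\lambda_n^{t+1} = \max\{0,\ \lambda_n^{t} + \rho_t \hat g_n(\mathbf{\tilde{W}}_{t+1})\}$ is the ``candidate'' multiplier appearing inside the $\max\{\cdot\}^2$ penalty before the clipping step \eqref{lagmult_1}. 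This identity is the algebraic heart of the argument.

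Second, handle the multiplier sequence. The two regimes must be separated: (a) $\{\rho_t\}$ bounded, and (b) $\rho_t \to \infty$. In case (a), the penalty stops increasing after finitely many iterations, the stopping test in Algorithm \ref{alg:ALMO} drives $\sigma_n^{t}\to 0$, hence $\hat g_n(\mathbf{\tilde{W}}_{t+1})\to$ a nonpositive quantity consistent with feasibility, and the candidate multipliers $\bar\lambda_n^{t}$ stay bounded (they are squeezed by the clipping bounds $[\lambda_n^{\min},\lambda_n^{\max}]$ on the stored $\lambda_n^t$, plus a bounded increment); extract a further subsequence so $\bar\lambda_n^{t}\to\lambda_n^*\ge 0$. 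In case (b), one uses the standard normalization trick: divide \eqref{eq:12} through by $\max\{1,\ \max_n \bar\lambda_n^{t+1}\}$; if the multipliers were unbounded along the subsequence, the limit would yield a nontrivial nonnegative combination $\sum_n \nu_n \,\mathrm{grad}\,\hat g_n(\mathbf{\tilde{W}}^*) = 0$ with the $\nu_n$ supported on indices where $\hat g_n(\mathbf{\tilde{W}}^*)=0$, i.e.\ a failure of a constraint qualification (Riemannian MFCQ/LICQ on $\mathcal{M}$). Assuming such a CQ holds at $\mathbf{\tilde{W}}^*$ (this is the hypothesis I would make explicit, as Birgin--Mart\'inez do), this is a contradiction, so the multipliers are bounded even here and again converge along a subsequence to some $\lambda_n^*\ge 0$.

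Third, pass to the limit in the three KKT ingredients. Stationarity: take $t\to\infty$ along the subsequence in \eqref{eq:12}, using $\epsilon_t\to 0$, continuity of $\mathrm{grad}\,\hat f$ and $\mathrm{grad}\,\hat g_n$ on $\mathcal{M}$, continuity of the projection onto the tangent bundle, and $\mathbf{\tilde{W}}_{t+1}\to\mathbf{\tilde{W}}^*$ (which follows because $\mathrm{dist}(\mathbf{\tilde{W}}_t,\mathbf{\tilde{W}}_{t+1})\to 0$ by the stopping condition of Algorithm \ref{alg:ALMO}); this gives $\mathrm{grad}\,\hat f(\mathbf{\tilde{W}}^*) + \sum_n \lambda_n^* \,\mathrm{grad}\,\hat g_n(\mathbf{\tilde{W}}^*) = 0$. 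Primal feasibility: by hypothesis $\mathbf{\tilde{W}}^*$ lies in the feasible set of $\text{(P2)}$, so $\hat g_n(\mathbf{\tilde{W}}^*)\le 0$. Complementary slackness: examine $\bar\lambda_n^{t+1} = \max\{0,\ \lambda_n^t + \rho_t \hat g_n(\mathbf{\tilde{W}}_{t+1})\}$; if $\hat g_n(\mathbf{\tilde{W}}^*) < 0$ strictly, then for large $t$ (in case (b) because $\rho_t\to\infty$ makes the bracket $\to -\infty$; in case (a) because $\sigma_n^t\to 0$ forces $\lambda_n^t/\rho_t \ge -\hat g_n \gg 0$ to fail, pinning $\bar\lambda_n^{t+1}=0$ eventually) one gets $\bar\lambda_n^{t+1}=0$, hence $\lambda_n^*=0$, so $\lambda_n^* \hat g_n(\mathbf{\tilde{W}}^*)=0$ in all cases. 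Together these are exactly the KKT conditions of $\text{(P5)}$.

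The main obstacle is the boundedness of the multiplier sequence in the $\rho_t\to\infty$ regime, which genuinely requires a constraint-qualification assumption at the limit point $\mathbf{\tilde{W}}^*$; without it the proposition can fail, so I would either state the CQ as an explicit hypothesis or invoke the Birgin--Mart\'inez AKKT/CAKKT machinery (their Theorems 1--2, already cited as \cite{birgin2010global}) which shows the limit point satisfies a sequential (approximate-KKT) optimality condition unconditionally, and then upgrade to exact KKT under the CQ. A secondary technical point is verifying that the tangent-space projection in the definition of $\mathrm{grad}$ is continuous in the base point on the sphere manifold $\mathcal{M}$ of \eqref{eqn_M} — this is routine since the projector in \eqref{proj_grdman} depends smoothly on $\mathbf{\tilde{W}}_t$ — and that the clipping in \eqref{lagmult_1} does not interfere with the limit, which is handled by working with the pre-clip candidate multipliers $\bar\lambda_n^{t+1}$ throughout and only using the clip bounds to get boundedness in case (a).
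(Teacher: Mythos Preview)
Your proposal is correct and follows the same high-level route as the paper: both reduce the argument to the augmented-Lagrangian convergence theory of Birgin--Mart\'inez \cite{birgin2010global} as adapted to Riemannian manifolds in \cite{liu2020simple}, with the paper's proof being essentially a citation of \cite[Proposition~3.2]{liu2020simple} followed by an informal sketch (compactness of $\mathcal{M}$, continuity of $\hat f,\hat g_n$, boundedness from the clip rule, and the penalty-increase mechanism forcing feasibility).

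Where you differ is in rigor rather than strategy. The paper's sketch does not separate the bounded-$\rho_t$ and $\rho_t\to\infty$ regimes, does not isolate the pre-clip candidate multipliers $\bar\lambda_n^{t+1}=\max\{0,\lambda_n^t+\rho_t\hat g_n(\mathbf{\tilde{W}}_{t+1})\}$ that actually appear in the gradient identity, and does not verify complementary slackness. More substantively, you correctly flag that boundedness of the multipliers in the $\rho_t\to\infty$ case requires a constraint qualification (Riemannian LICQ/MFCQ) at $\mathbf{\tilde{W}}^*$, and that without it one only obtains a sequential/approximate-KKT conclusion; the paper's statement and proof are silent on this hypothesis, so your version is the more careful one. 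One small point to tidy: your appeal to ``$\mathrm{dist}(\mathbf{\tilde{W}}_t,\mathbf{\tilde{W}}_{t+1})\to 0$ by the stopping condition'' is not quite right under $\epsilon_{\min}=0$ (the loop never terminates), but this is harmless---simply take the convergent subsequence directly on the indices where the gradient bound \eqref{eq:12} is evaluated, which is how \cite{liu2020simple} proceeds.
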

\begin{proof}
The proof can be found in \cite[\textit{Proposition 3.2}]{liu2020simple}. More specifically, at the initial iteration ($t = 0$), we start with a feasible point $\mathbf{\tilde{W}}_0 \in \mathcal{M}$ and a set of Lagrange multipliers $\boldsymbol{\lambda}^0 \in \mathbb{R}^{N}$. Assume the algorithm converges up to iteration $t$. This implies that $\mathbf{\tilde{W}}_t$ is a stationary point of the Lagrangian function associated with the current set of multipliers $\boldsymbol{\lambda}^t$ and penalty parameter $\rho_t$.

\begin{table}[t]
\renewcommand{\arraystretch}{1.0}
\centering
\caption{Simulation and algorithm parameters.}
\label{table-notations}
\begin{tabular}{c c c c}    
\hline
\rowcolor{LightGray}
\textbf{Parameter}& \textbf{Value} & \textbf{Parameter}& \textbf{Value}\\  \hline \hline
$\sigma^2$ & \qty{-80}{\dB m}  & $\Gamma_{n}^{\thr}$  & \qty{20}{\dB m}  \\ 
$M$ & \num{16} & $p_{\rm{max}}$  & \qty{30}{\dB m} \\ 
$K$ & \num{2} & $\delta_1,\delta_2$  & \num{e-6}  \\ 
$N$ & \num{4} & $\nu$  & \num{2}  \\
$d_{\min}$ & \num{e-10} & $\epsilon_0$  & \num{e-3} \\
$\epsilon_{\min}$ & \num{e-6} & $\tau$  & \num{0.5}  \\
$\theta_\epsilon$ & \num{0.5} & $\theta_\rho$ & \num{0.25} \\
$\rho_0$ & \num{1} & $\{\lambda^{\min}_{n},\lambda^{\max}_{n}\}$ & \{0,100\} \\ \hline
\end{tabular}
\vspace{-2mm}
\end{table}

Considering $t+1$ iteration, the update of the Lagrange multipliers using \eqref{lagmult_1} guarantees that $\mathcal{L}_\rho(\mathbf{\tilde{W}}, \boldsymbol{\lambda})$ remains bounded below since the penalty terms associated with constraint violations are non-negative. The multipliers are clipped within the bounds $\lambda^{\min}_{n} \leq \lambda^{\max}_{n}$. By the properties of the Riemannian gradient and the assumption that the manifold $\mathcal{M}$ is compact, the sequence $\{\mathbf{\tilde{W}}_t\}^{\infty}_{t=0}$ lies in a compact subset of $\mathcal{M}$ and therefore has a convergent subsequence. Let $\mathbf{\tilde{W}}^*$ be the limit of this subsequence. Due to the continuity of $\hat{f}(\mathbf{\tilde{W}})$ and $\hat{g}_n(\mathbf{\tilde{W}})$, we have that $\mathbf{\tilde{W}}^*$ is a stationary point of $\mathcal{L}_\rho(\mathbf{\tilde{W}}, \boldsymbol{\lambda})$. Furthermore, the algorithmic rule that adjusts the penalty parameter $\rho_t$ in step 11 of Algorithm \ref{alg:ALMO} ensures that, in the limit, constraint violations are penalized sufficiently. This implies that any limit point of the sequence $\{\mathbf{\tilde{W}}_t\}^{\infty}_{t=0}$ must satisfy the constraints, making $\mathbf{\tilde{W}}^*$ a feasible point of the original problem. Thus, by the first-order optimality conditions on Riemannian manifolds and the properties of the augmented Lagrangian method, we conclude that the sequence $\{\mathbf{\tilde{W}}_t\}^{\infty}_{t=0}$ converges to a stationary point of $\mathcal{L}_\rho(\mathbf{\tilde{W}}, \boldsymbol{\lambda})$. 
\end{proof}

\section{Simulation Results}
Next, we present simulation results to evaluate the efficacy of the proposed IMBO algorithm.

\subsection{Simulation Setup and Parameters}
Here, we present the configuration and parameters used in our simulations. Unless otherwise stated, the simulation parameters are given in Table \ref{table-notations}. Furthermore, the pathloss model is given by $L(d) = C_0 \left(\frac{d}{D_0}\right)^{-\nu},$ where $C_0=\qty{-30}{\dB}$ represents the pathloss at the reference distance $D_0 = \qty{1}{\m}$, $d$ represents the individual link distance, and $\nu$ denotes the pathloss exponent. To establish a centralized communication point, the BS is positioned at coordinates $\{0,0\}$.  The users are randomly distributed within circular regions centered at $\{50,30\}$ with a radius of \qty{20}{\m}. Users are positioned at angles of \qty{-30}{\degree} and \qty{30}{\degree} while the sensing directions from the BS are set at \qty{-54}{\degree}, \qty{-18}{\degree}, \qty{18}{\degree}, and \qty{54}{\degree}, covering a wide angular range.  The simulation consists of \num{e3} Monte Carlo trials. In addition, the circles in the figures represent communication users while the triangles denote sensors.

\begin{figure}[!t]\centering\vspace{-0mm}
	\includegraphics[width=0.45\textwidth]{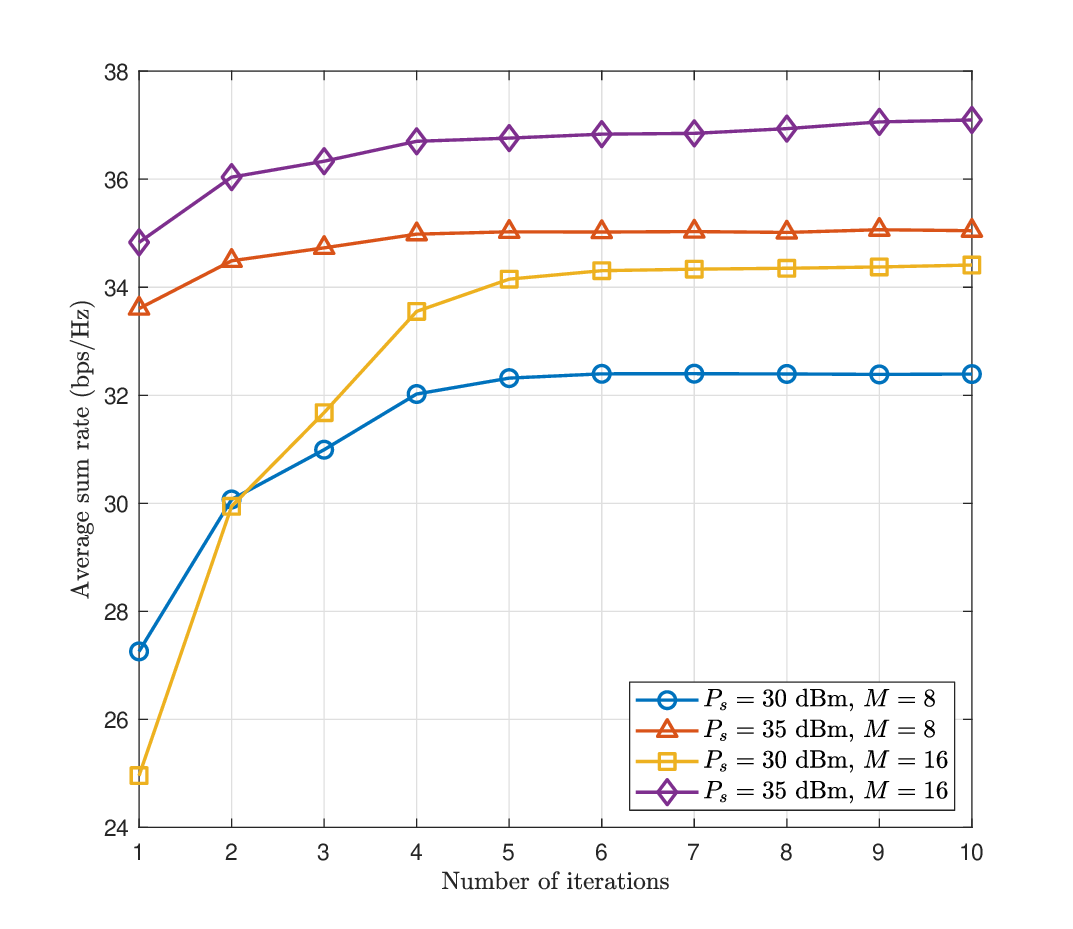}\vspace{-0mm}
	\caption{Convergence rate under different system setups. }
	\label{fig_convergence}\vspace{-2mm}
\end{figure}

\subsection{Benchmark Schemes}
MMSE and ZF beamformers are widely used due to their computational simplicity and effective interference management capabilities \cite{Marzettabook2016}.

1) \textit{ZF beamforming:} This beamformer considers the communication users only and is denoted as $\mathbf{W}_{\mathrm{ZF}} =   \mathbf{H}(\mathbf{H}^{\mathrm{H}}\mathbf{H})^{-1}$, where $\mathbf{H} \in \mathbb{C}^{M \times K}$ represent the channel matrix that contains the channel vectors of all communication users, with the $k$-th column vector denoted as $\mathbf{h}_k$ for $k \in \mathcal{K}$. The ZF beamformer completely nullifies the interference at non-targeted users. However, this approach completely ignores the sensing targets, and the beamforming matrix depends inversely on the channel matrix. While ZF beamforming eliminates inter-user, its performance degrades in low-SNR conditions. 

2) \textit{MMSE beamforming:} This beamforming technique is designed to minimize the total mean squared error (MSE) between the estimated and the actual communication user signals. The MMSE beamforming matrix is given by $ \mathbf{W}_{\mathrm{MMSE}} = \mathbf{H}\left( \mathbf{H}^{\mathrm{H}}\mathbf{H} + \sigma^2 \mathbf{I}_{K} \right)^{-1} $. This approach also ignores the presence of sensing targets. It is designed to mitigate interference and optimally balance the signal of interest and the noise. As a result, it can enhance communication performance in moderate to low SNR scenarios. 

\begin{figure}[!t]\centering\vspace{-0mm}
	\includegraphics[width=0.45\textwidth]{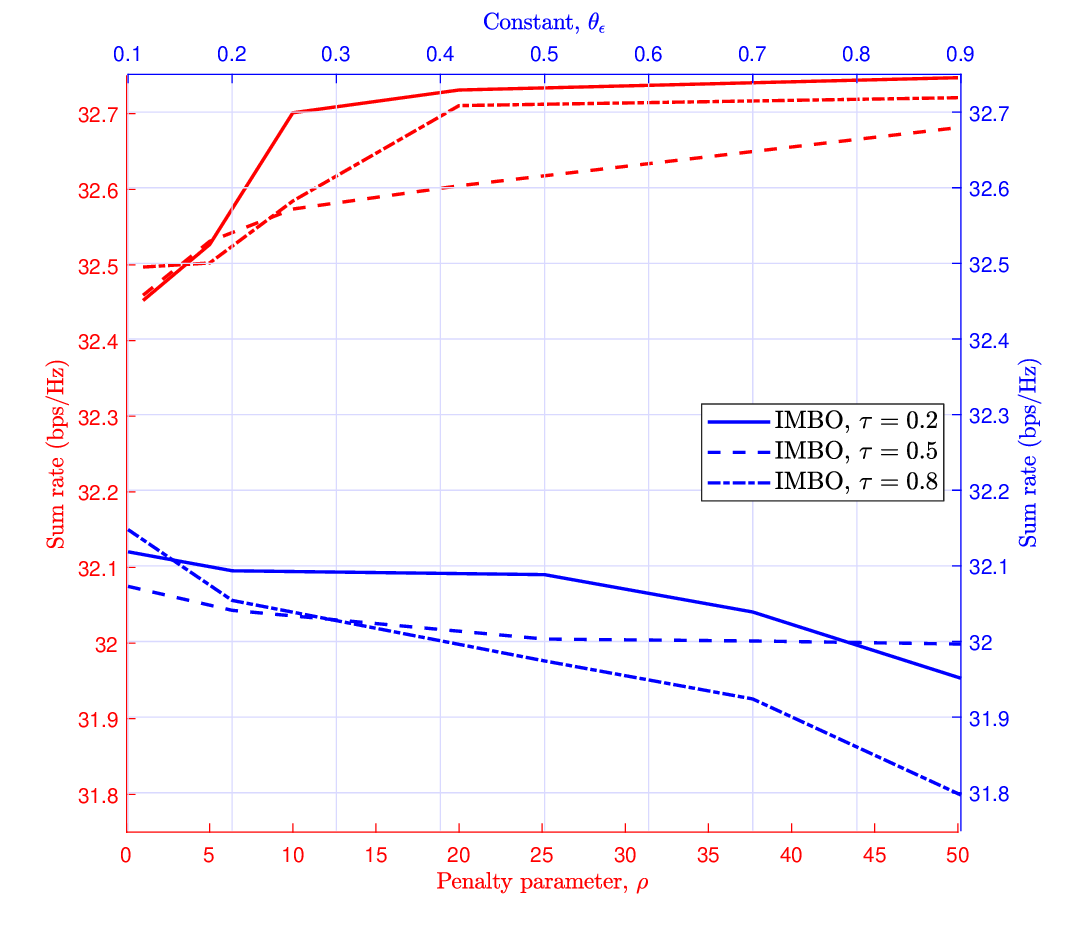}\vspace{-0mm}
	\caption{Sum rate versus penalty parameter $\rho$ (bottom x-axis) and constant $\theta_\epsilon$ (top x-axis) for the IMBO algorithm, differentiated by line style for each $\tau$ setting.}
	\label{fig_theta_rho}\vspace{-2mm}
\end{figure}

3) \textit{CCPA Method:}  This benchmark solves $\text{(P1)}$ using an iterative convex-concave procedure algorithm (CCPA)  based on SDR and SCA \cite{He2022}. 
In particular, by defining $\qW_k = \qw_k \qw_k^{\rm{H}}$ and exploiting that $\qW_k$ is semidefinite with $\text{Rank}(\qW_k) = 1$, $\text{(P1)}$ is reformulated as a conventional SDP by relaxing the rank one constraint \cite{Hakimi9956832}. The SDP problem is solved using the  CVX tool \cite{boyd2004convex}. Finally, to impose the relaxed rank one constraint, we utilize  Gaussian randomization \cite{Qingqing}.

\subsection{Convergence Rate of Algorithm \ref{alg:final}}
In Fig. \ref{fig_convergence}, we investigate IMBO's convergence, illustrating the impact of transmit power and antenna count on the average sum rate across iterations. Notably, an increase in transmit power from \qty{30}{\dB m} to \qty{35}{\dB m}, coupled with an antenna array expansion from \num{8} to \num{16} elements, significantly improves the average sum rate, exploiting the increased spatial degrees of freedom. The convergence of IMBO  is evident, with the average sum rate rapidly increasing within the initial five iterations and plateauing thereafter, regardless of system parameters. This indicates a rapid convergence and verifies the efficacy of the proposed ALMO  algorithm. 

\begin{figure}[!t]\centering\vspace{-0mm}
	\includegraphics[width=0.45\textwidth]{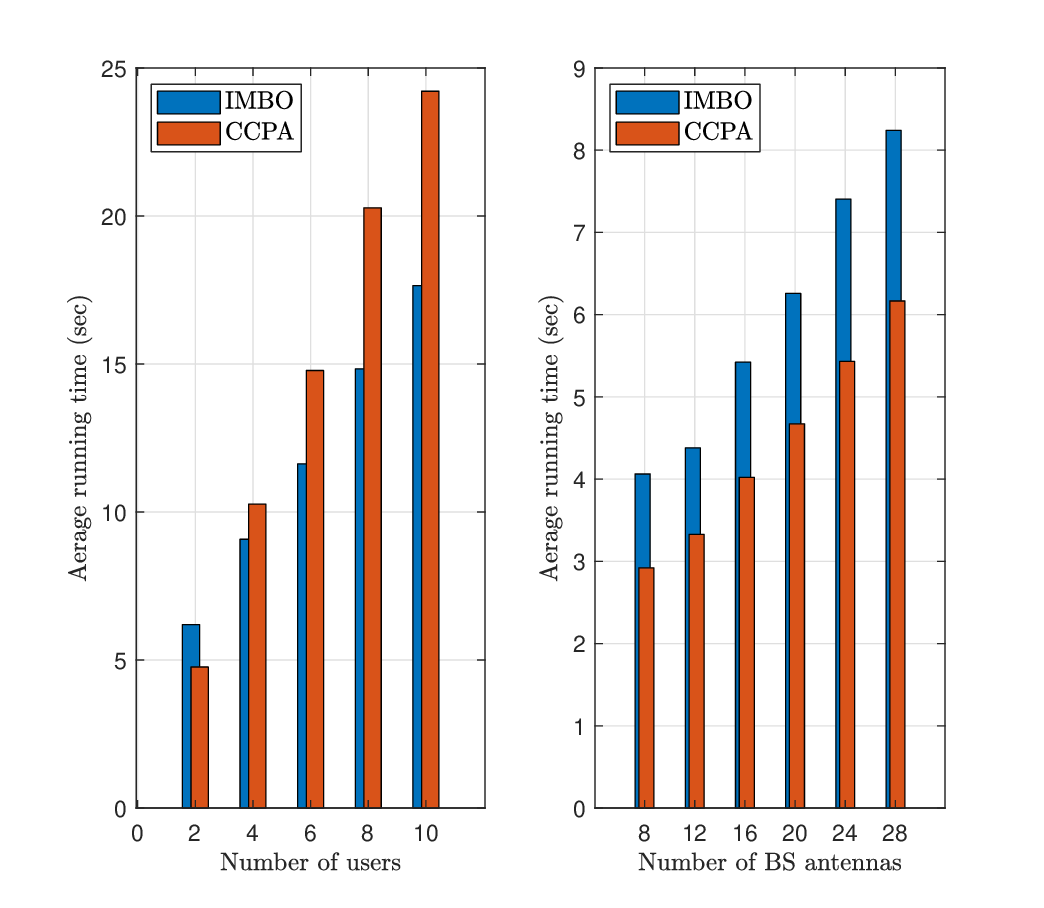}\vspace{-0mm}
	\caption{Average IMBO and CCPA running time versus number of users and BS antennas.}
	\label{avgRunningTime}\vspace{-2mm}
\end{figure}

\subsection{Tuning IMBO Algorithm Parameters}
Fig. \ref{fig_theta_rho} illustrates the IMBO  performance across varying penalty parameter values $\rho$ and constant, $\theta_\epsilon$. Three curves represent the sum rate for $\tau$ values of \num{0.2}, \num{0.5}, and \num{0.8}. 

Specifically, when $\tau  = \num{0.2}$, a gradual increase is demonstrated in the sum rate as $\rho$ increases, plateauing beyond a certain point. In contrast, $\tau = \num{0.5}$ and $\tau = \num{0.8}$ initially exhibit a steep ascent, followed by a plateau and then a slight decline, suggesting an effective range of $\rho$ for these particular $\tau$ values. The choice of these variables - $\tau$, $\theta_{\epsilon}$, and $\rho$ - is motivated by their influence on the convergence and performance of the IMBO algorithm. The ratio $\tau$ is critical in updating the Lagrange multipliers, affecting the tightness of constraint satisfaction. Meanwhile, the penalty parameter $\rho$ influences the trade-off between the sum rate's convergence speed and the precision of reaching a feasible solution. The parameters, including the reduction factors $\theta_{\rho}$ and $\theta_{\epsilon}$, and the Lagrange multiplier boundaries $\lambda_{\max}$ and $\lambda_{\min}$, are also carefully selected to strike a balance between algorithmic accuracy and computational efficiency, ultimately optimizing the ISAC system's performance.

\begin{figure}[!t]
    \centering
    \begin{subfigure}[b]{0.45\textwidth}
        \centering
        \includegraphics[width=\textwidth]{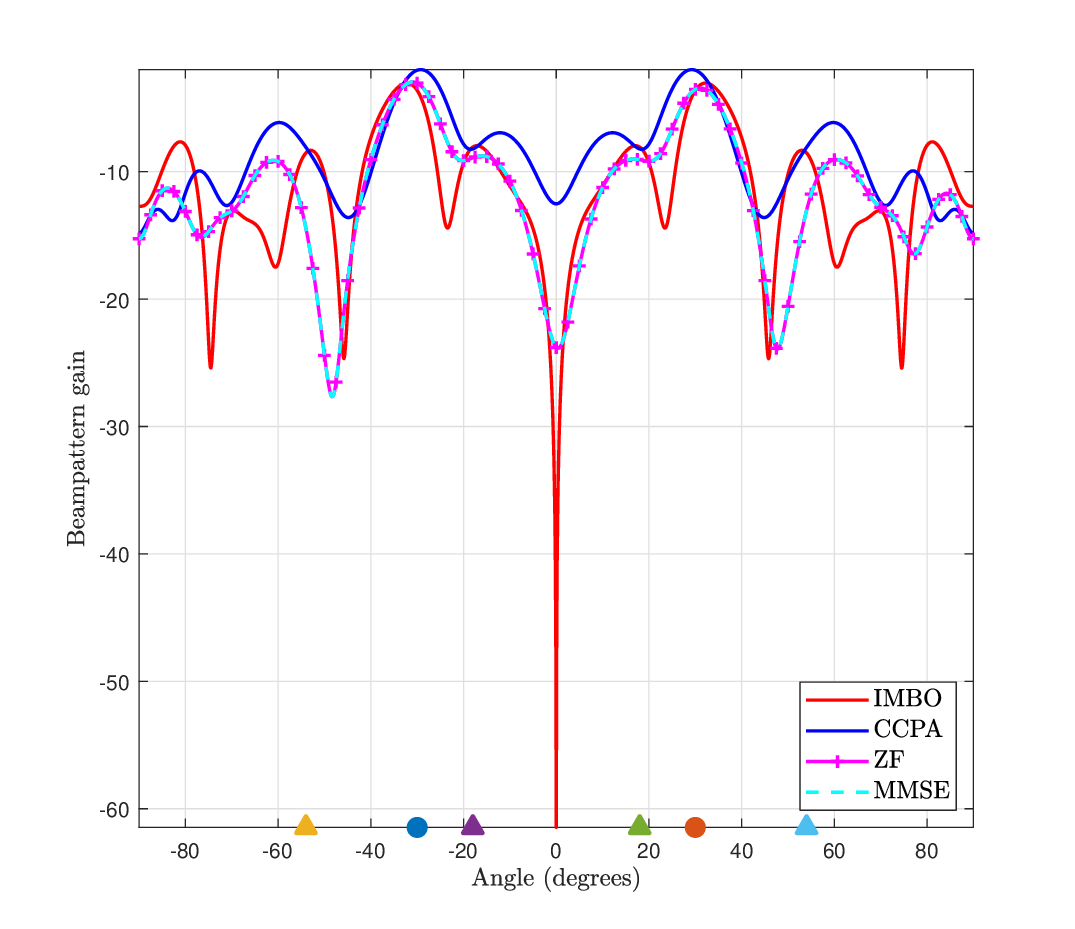}\vspace{-2mm}
        \caption{}
        \label{fig_beamgain_M_8_2user_4tag}
    \end{subfigure}
    \hfill  
    \begin{subfigure}[b]{0.48\textwidth}
        \centering
        \includegraphics[width=\textwidth]{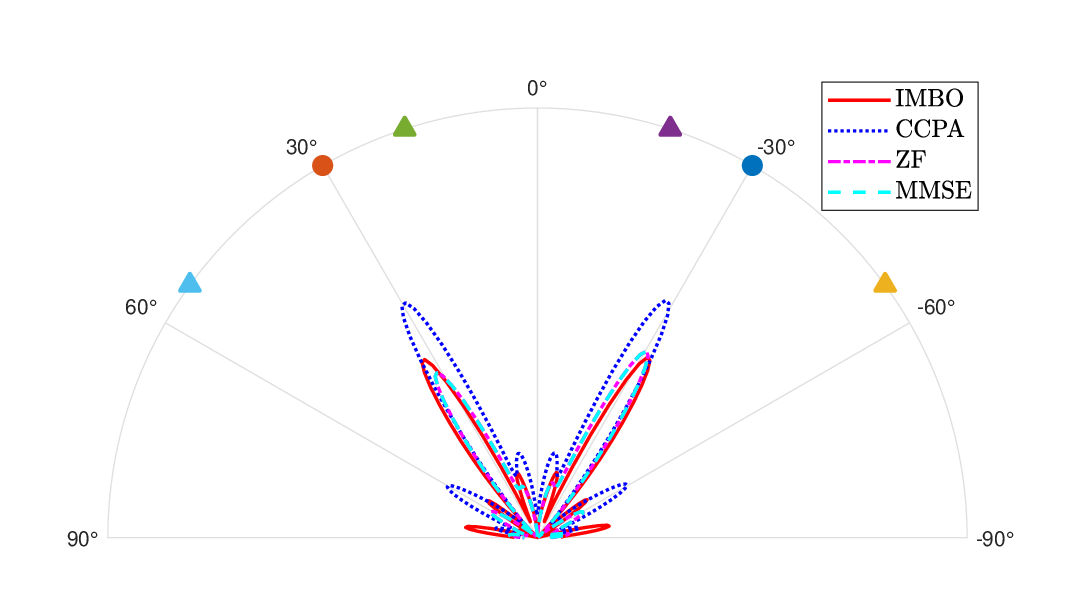}\vspace{-2mm}
        \caption{}
        \label{fig_beamgain_circle_M_8_2user_4tag}
    \end{subfigure}
    \caption{Beamforming gain comparison with $M = \num{8}$. (a) Directional gain profiles for various algorithms across a \qty{+-90}{\degree} angular spread. (b) Polar plot of beam patterns, showcasing the directional gains and sidelobe structure for each method.}
    \label{fig:combined5}
    \vspace{-2mm}
\end{figure}

\begin{figure}[!t]
    \centering
    \begin{subfigure}[b]{0.45\textwidth}
        \centering
        \includegraphics[width=\textwidth]{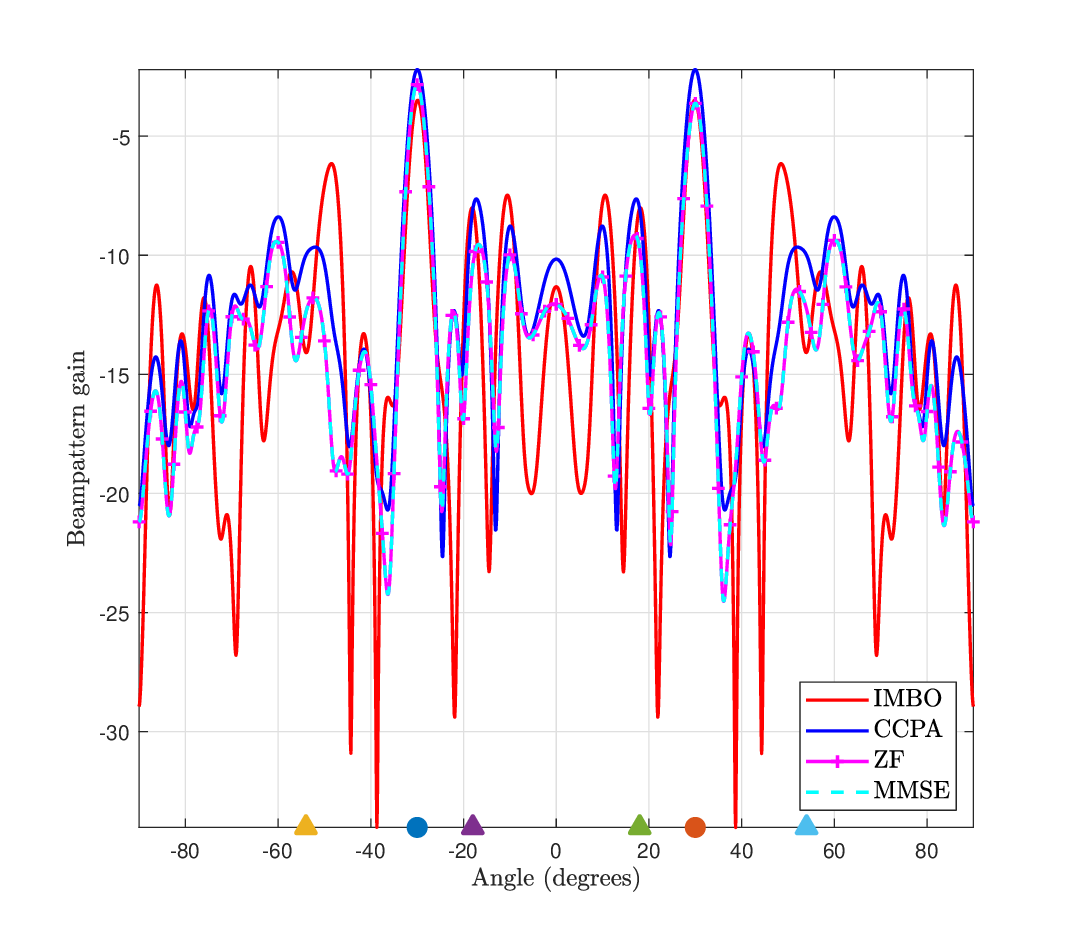}\vspace{-2mm}
        \caption{}
        \label{fig_beamgain_M_20_2user_4tag}
    \end{subfigure}
    \hfill  
    \begin{subfigure}[b]{0.48\textwidth}
        \centering
        \includegraphics[width=\textwidth]{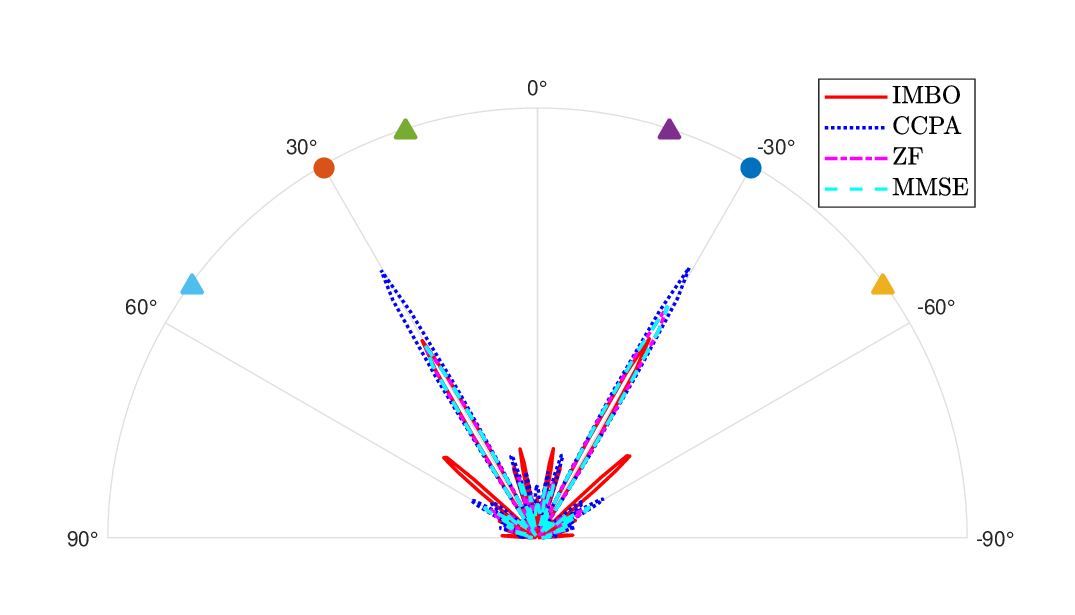}\vspace{-2mm}
        \caption{}
        \label{fig_beamgain_circle_M_20_2user_4tag}
    \end{subfigure}
    \caption{Beamforming gain comparison with $M = \num{20}$. (a) Directional gain profiles for various algorithms across a \qty{+-90}{\degree} angular spread. (b) Polar plot of beam patterns, showcasing the directional gains and sidelobe structure for each method.}
    \label{fig:combined6}
    \vspace{-2mm}
\end{figure}

\subsection{Average Running Time}
Fig.~\ref{avgRunningTime} compares the average running times of  IMBO and CCPA.  These data are from  Matlab simulations for an  Intel\textsuperscript{\textregistered} Xeon\textsuperscript{\textregistered} CPU, clocking at \qty{3.5}{\GHz}. From the left-hand graph with $M=\num{20}$, we can observe that as the number of users increases from \num{2} to \num{10}, the average running time for both algorithms also increases. However, the IMBO algorithm consistently outperforms CCPA, showing a lower average running time across all user counts. The right-hand graph compares the two algorithms  as the number of BS antennas increases from \num{5} to \num{30}. Similar to the trend in user count, the average running time for both algorithms increases with the number of BS antennas. The IMBO algorithm demonstrates a smaller running time for lower numbers of BS antennas. However, it is slightly worse than the CCPA algorithm. The IMBO algorithm generally performs better in terms of computational efficiency, which can be crucial for practical implementations in real-time systems.

\begin{figure*}[!ht]
  \centering
  \begin{subfigure}[b]{0.45\textwidth}
    \includegraphics[width=1\textwidth]{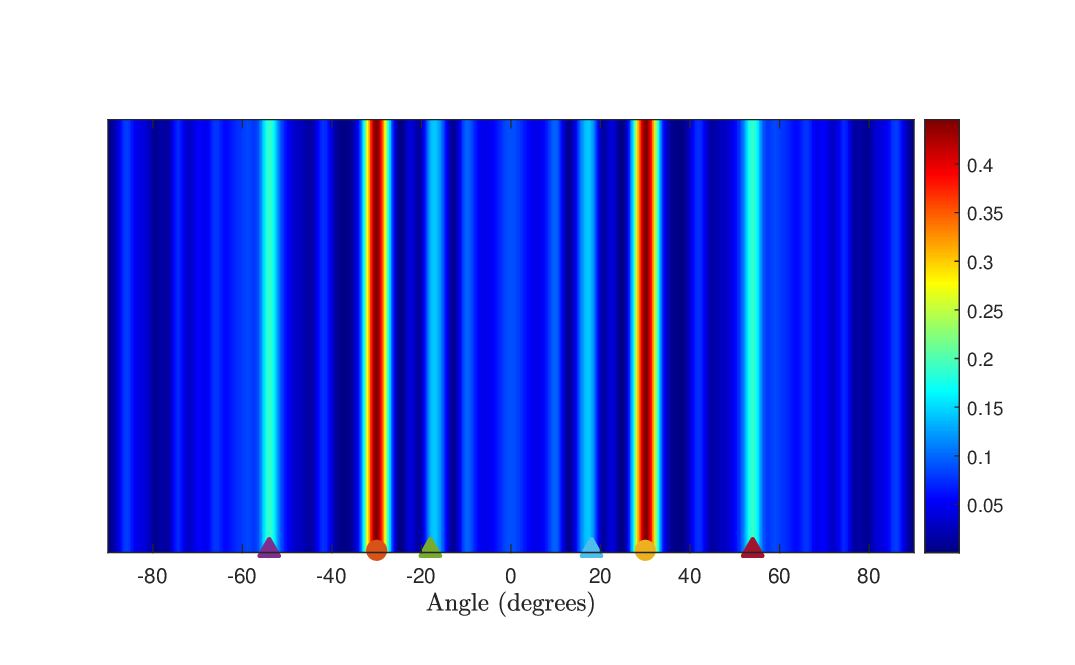}\vspace{-2mm}
    \caption{IMBO}
    \label{fig:a}
  \end{subfigure}\vspace{-0mm}
  \begin{subfigure}[b]{0.45\textwidth}
    \includegraphics[width=1\textwidth]{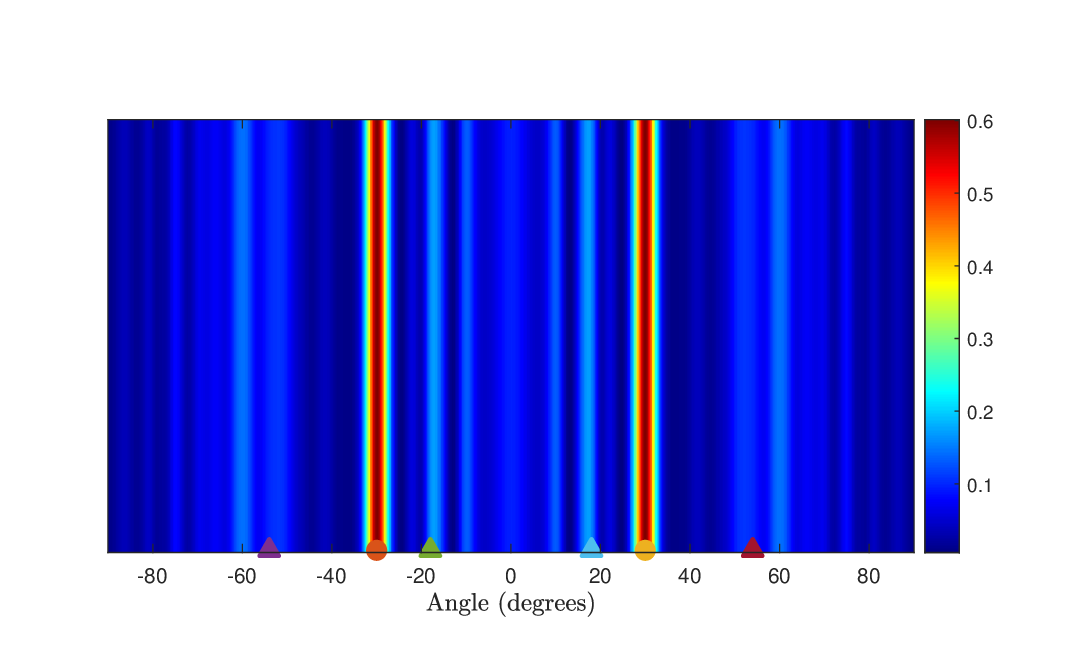}\vspace{-2mm}
    \caption{SDR}
    \label{fig:b}
  \end{subfigure}\vspace{-0mm}
  \begin{subfigure}[b]{0.45\textwidth}
    \includegraphics[width=1\textwidth]{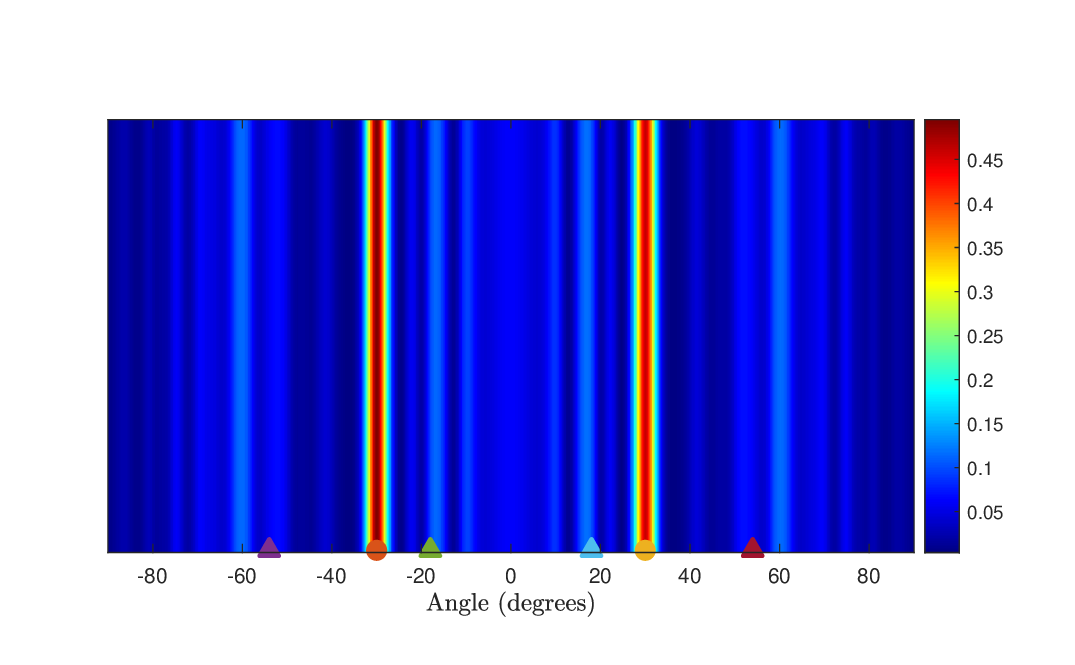}\vspace{-2mm}
    \caption{MMSE}
    \label{fig:c}
  \end{subfigure}
  \begin{subfigure}[b]{0.45\textwidth}
    \includegraphics[width=1\textwidth]{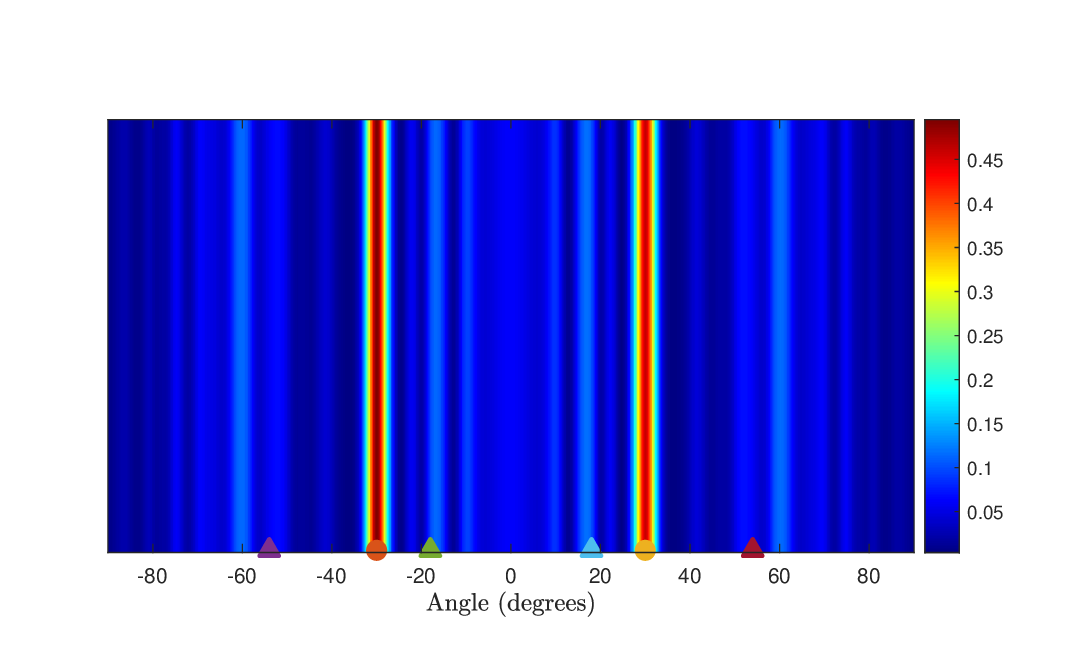}\vspace{-2mm}
    \caption{ZF}
    \label{fig:d}
  \end{subfigure}
  \caption{Beam pattern gain profiles for various algorithms with $M = \num{20}$, illustrating the gain variations and directivity in a color-coded scale. The intensity of each color reflects the strength of the beamforming gain, with red indicating higher gains and blue indicating lower gains.}
  \label{fig:fig_beamgain_heatmap_M_20_2user_4tag}
  \vspace{-2mm}
\end{figure*}

\subsection{Beampattern Gains}
Fig.~\ref{fig:combined5}  presents a comparison of beamforming gains utilizing an array of $M = \num{8}$ antennas at the BS for a set of various algorithms. Specifically, in Fig.~\ref{fig_beamgain_M_8_2user_4tag}, the directional gain profiles are plotted to evaluate the performance of various beamforming approaches. This analysis highlights each algorithm's efficacy in concentrating radiated power in targeted directions, essential for optimizing spatial filtering and minimizing interference. The acuteness of the main lobe peaks directly correlates with the directive gain, which indicates an algorithm's precision in steering the beam. Conversely, the depth of the troughs in the gain profiles indicates the algorithms' ability to reduce signal reception from non-targeted directions. This is noticeable through the attenuation levels in the sidelobes. The polar plot in Fig. \ref{fig_beamgain_circle_M_8_2user_4tag} complements the line graph by mapping the beam patterns onto a polar coordinate system, thus providing an intuitive view of the angular spread and sidelobe structure. The polar representation visualizes the beamwidth and sidelobe suppression capabilities of methods. For instance, a narrower main lobe with lower sidelobes corresponds to a more focused beam with less potential for sidelobe interference. 

\begin{figure}[!t]
    \centering
    \begin{subfigure}[b]{0.45\textwidth}
        \centering
        \includegraphics[width=\textwidth]{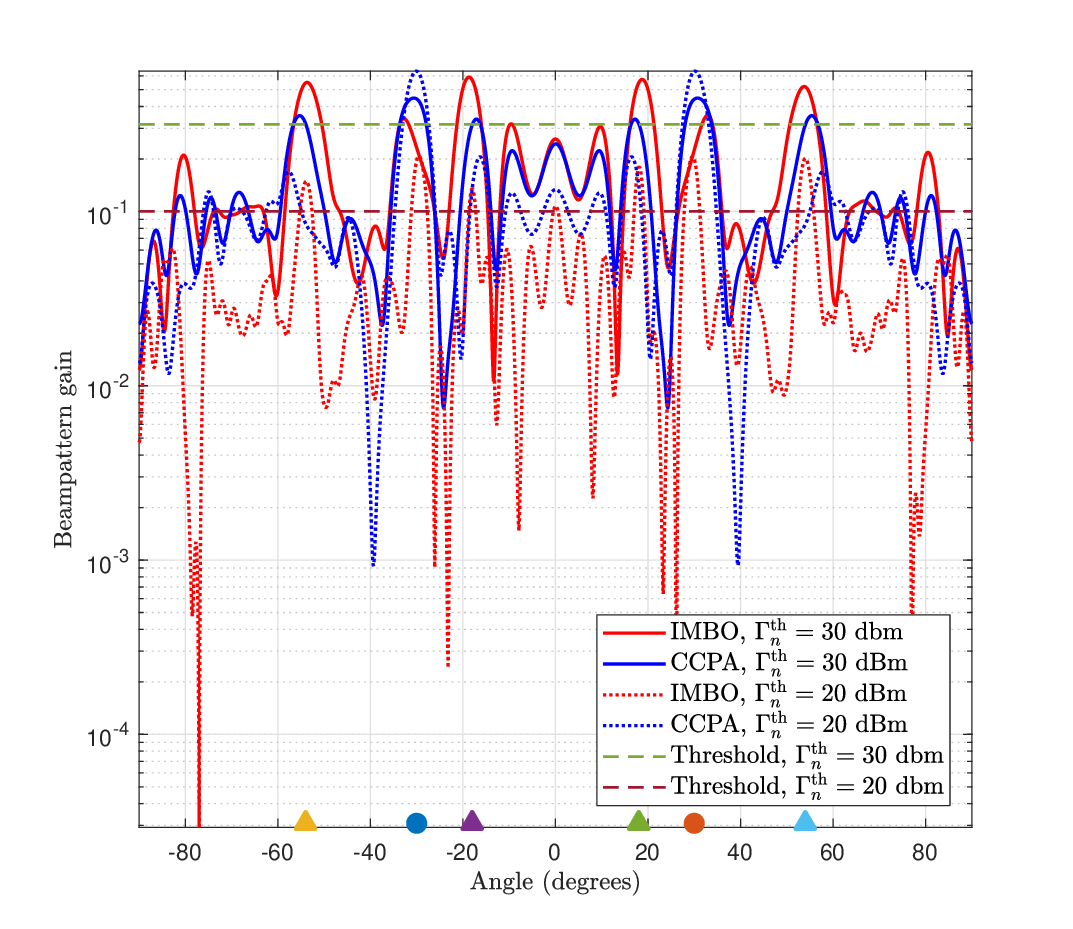}
        \caption{}
        \label{fig_beamgain_different_sensing_gamma}
    \end{subfigure}
    \hfill  
    \begin{subfigure}[b]{0.48\textwidth}
        \centering
        \includegraphics[width=\textwidth]{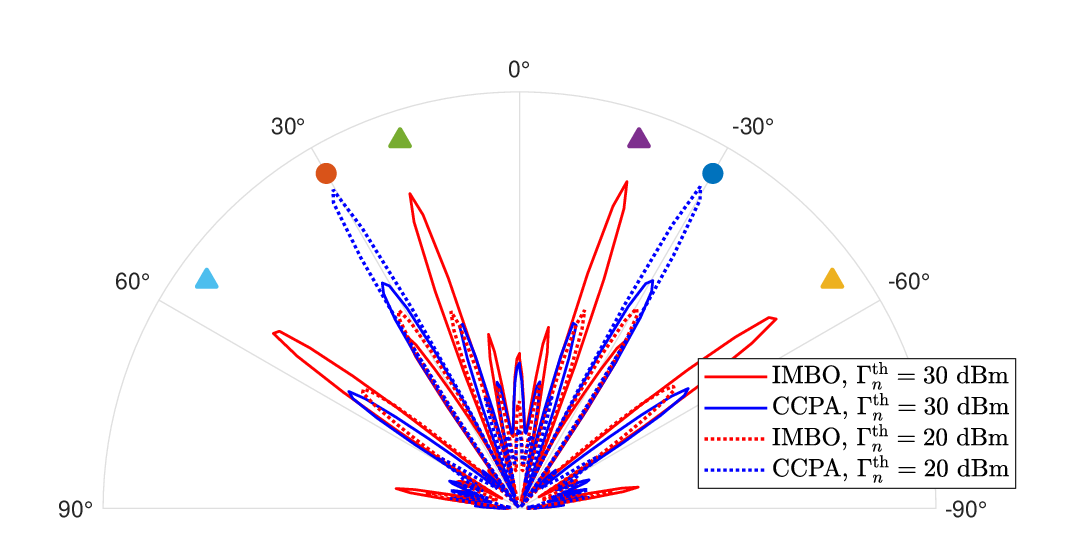}
        \caption{}
        \label{fig_beamgain_circle_different_sensing_gamma}
    \end{subfigure}
    \caption{Beamforming performance for IMBO and CCPA  at different targeted sensing beampattern gains. (a) Logarithmic gain profiles across a \qty{+-90}{\degree} angular span. (b) Polar plot of beam patterns, showcasing the directional gains and sidelobe structure for each.}
    \label{fig:combined9}
    \vspace{-2mm}
\end{figure}

Fig.~\ref{fig:combined6} extends this comparison to a larger antenna array with $M = \num{20}$ elements. The increase in the number of antenna elements is observed to enhance the directivity and gain of the beam patterns, as demonstrated by the tighter main lobes and increased suppression of sidelobes. This is particularly evident when comparing the polar plots between Figs.~\ref{fig_beamgain_circle_M_20_2user_4tag} and~\ref{fig_beamgain_circle_M_8_2user_4tag}, where the beams are noticeably more focused with higher element count, demonstrating the theoretical beamforming gain improvement with larger arrays.

The IMBO algorithm displays a consistently high level of directivity across both figures. Increasing $M$ results in sharper main lobe peaks and more pronounced sidelobe suppression, thus underscoring an improved spatial filtering efficacy. The CCPA algorithm, although showing heightened directivity with array size amplification, marginally underperforms compared to the manifold approach in terms of peak gain acuteness and sidelobe damping. In contrast, while the ZF and MMSE algorithms exhibit increased directivity towards user positions, they present broader main lobes than the IMBO algorithm.

Fig. \ref{fig:fig_beamgain_heatmap_M_20_2user_4tag} offers a detailed visualization of the spatial gain profiles achieved by different beamforming algorithms for $M = \num{20}$ antennas. Each figure depicts the gain distribution, employing a color gradient to convey the magnitude of the beamforming gain intuitively. The color intensity indicates the gain strength, e.g., reds indicate higher gains, thus representing areas of focused energy emission. In contrast, blues represent regions of minimal radiated power, indicative of gain troughs or nulls.

For IMBO and CCPA, the plots exhibit multiple, narrowly focused high-gain regions, reflecting the algorithms' capability of generating sharp, directive beams while minimizing interference to and from other directions. Such beam sharpness is particularly advantageous in dense user environments where the precision of beam steering is paramount. As observed, IMBO  creates sharper beam gains toward users and sensors than CCPA. The ZF and MMSE algorithms demonstrate a distinct beam pattern with wider main lobes. The broadened main lobes imply a less focused energy transmission, which may cover a larger spatial area but with reduced directivity. This characteristic can be beneficial when users are spread over wider angular ranges or the system requires a trade-off between directivity and coverage.

Fig.~\ref{fig:combined9} outlines the beamforming performance for IMBO and CCPA  under different sensing targets. In Fig.~\ref{fig_beamgain_different_sensing_gamma},  the logarithmic beam gain extends over a \qty{+-90}{\degree} angular span. Furthermore, the thresholds denoted by the dash-dot lines serve as benchmarks for assessing the algorithm's performance against predefined power levels. This figure shows how closely each algorithm's beamforming pattern follows or exceeds the thresholds.

IMBO  exhibits a consistent pattern at both power levels, indicating its robust beamforming ability across a diverse power range. A comparative analysis between \qty{20}{\dB m} and \qty{30}{\dB m} sensing thresholds reveals that higher power thresholds lead to sharper peaks for targets, indicating a more concentrated energy focus and improved sidelobe suppression. The CCPA algorithm, while demonstrating a similar trend in gain enhancement with increased power levels, shows a slightly different pattern. The gain curves suggest that CCPA  may provide broader coverage at the cost of less directivity compared to IMBO. This characteristic becomes particularly pronounced at the \qty{30}{\dB m} power threshold, where we can observe that the main lobes are less peaked, implying a trade-off that favors a wider beam spread over pinpoint accuracy.

The polar plot (Fig.~\ref{fig_beamgain_circle_different_sensing_gamma}) provides a clear visualization of beamwidth and sidelobe behavior for each algorithm. Specifically, IMBO's narrow beamwidth is visually evident, contrasting with the broader lobes of CCPA. The broader lobes of the CCPA may facilitate better coverage but with less isolation between users, which can be advantageous or disadvantageous depending on the system's requirements.

\subsection{Sum Rate Versus Transmit Power}
Fig.~\ref{methods_vs_power} illustrates the relationship between the sum rate and the BS  transmit power, $P_s$, for $M=\num{16}$. The  ZF and MMSE algorithms have identical performance metrics across the evaluated power range. Nevertheless, the proposed IMBO algorithm achieves the highest sum rate.   For instance, at \qty{32}{\dB m} transmit power, IMBO outperforms the CCPA and ZF/MMZE by \qty{6.5}{\percent} and \qty{0.9}{\percent} in sum rate, respectively. All algorithms consistently increase the sum rate as transmit power increases, underscoring their ability to improve overall system throughput. Although ZF and MMSE beamforming achieve roughly the same communication rates as IMBO,  they do not consider sensing targets. 
This suggests that IMBO is more power-efficient, achieving higher throughput gains (i.e., communication and sensing) per unit of power increase. Conversely, CCPA  shows a smaller increase in the sum rate with the increased transmit power. Thus, it is less efficient in converting additional power into throughput gains than other algorithms.

\begin{figure}[!t]\centering\vspace{-0mm}
	\includegraphics[width=0.45\textwidth]{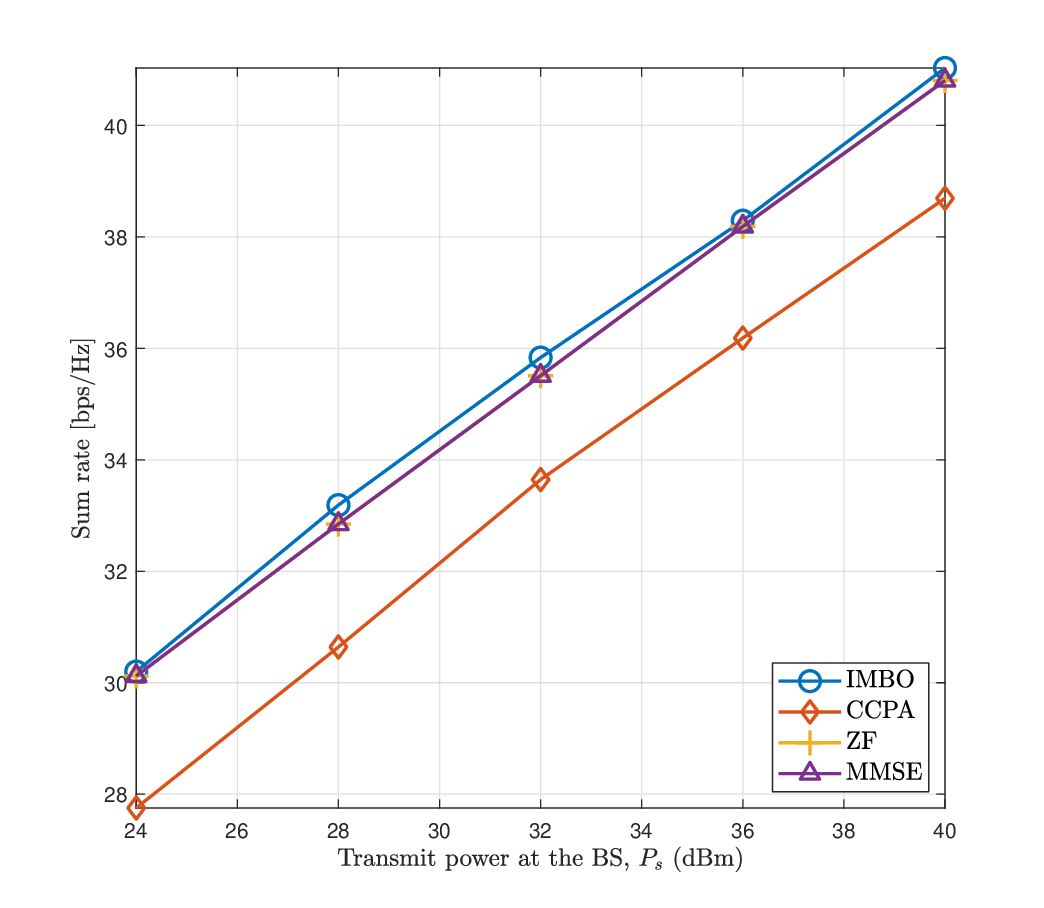}\vspace{-0mm}
	\caption{sum rate versus the transmit power at the BS, $P_s$, for various algorithms.} 
 \label{methods_vs_power}\vspace{-2mm}
\end{figure}

\subsection{Sum Rate Versus Number of BS Antenna}
Fig.~\ref{methods_vs_antenna} compares the sum rate of various algorithms for different numbers of BS antennas, $M$, for $P_s=\qty{30}{\dB m}$. A higher number of BS antennas correlates with an increased sum rate for all algorithms. Thus,  they can effectively leverage the spatial multiplexing benefits of a greater antenna count.

\begin{figure}[!t]\centering\vspace{-0mm}
	\includegraphics[width=0.45\textwidth]{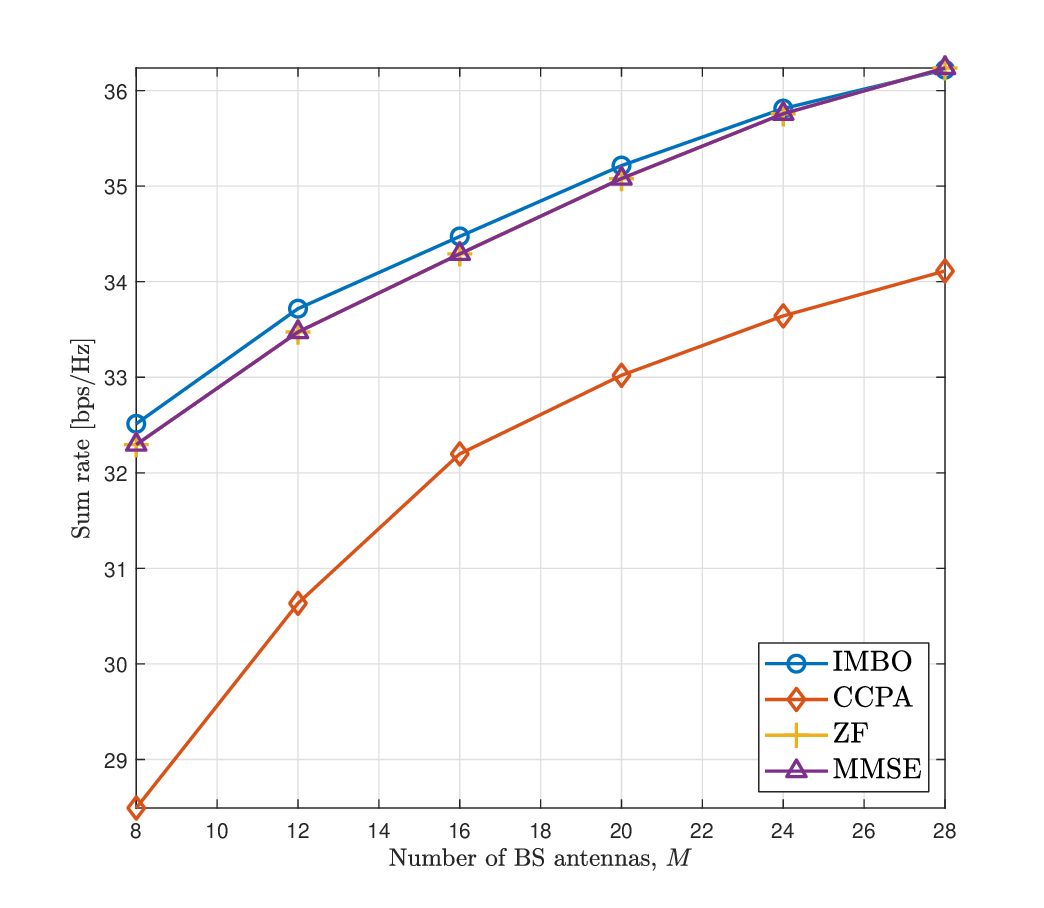}\vspace{-0mm}
	\caption{Sum rate versus the number of BS antennas, $M$, for various algorithms. }	\label{methods_vs_antenna}\vspace{-2mm}
\end{figure}

Notably, IMBO exhibits a superior performance trend, substantially outperforming the other algorithms across the range of numbers of antennas. For example, with $M=\num{12}$, it delivers \qty{10.1}{\percent} and \qty{0.8}{\percent} sum rate gains over the CCPA and ZF/MMSE, respectively. Conversely, CCPA performs poorly due to using the SCA approximation and the rank-one relaxation in SDR.

\section{Conclusion}
Resource allocation in ISAC is a challenging, non-convex problem. To address this, we have introduced a Riemannian ALMO solution, which efficiently handles the constrained resource allocation problem and yields significant performance gains by concurrently optimizing the communication sum rate and ensuring sufficient sensing gain. We have seen that the proposed algorithm (IMBO) outperforms existing ZF, MMSE, and CCPA algorithms, implying that the MO technique provides a fresh perspective and a dynamic pathway for resource allocation optimization. Simulation results affirm the robustness and adaptability of our techniques, suggesting practical viability in ISAC systems. Interesting issues for future research include exploring scaling and applicability across diverse ISAC architectures to enhance efficiency and suitability for real-world deployment.

\bibliographystyle{IEEEtran}
\bibliography{IEEEabrv,ref}

\end{document}